\tikzstyle{agentnode} = [draw,circle,fill=blue!5!white,inner sep = 0.1cm,minimum height=0.8cm,minimum width = 0.8cm]
\tikzstyle{itemnode} = [draw,fill=red!5!white,minimum height=0.8cm,minimum width = 0.8cm]
\pgfplotsset{compat=1.17}
\newcommand{\NSW}{\text{NSW}\xspace}
\newcommand{\R}{\mathbb{R}}
\renewcommand{\cal}[1]{\mathcal{#1}}
\newtheorem{theorem}{Theorem}[section]
\newtheorem{conjecture}[theorem]{Conjecture}
\newtheorem{lemma}[theorem]{Lemma}
\newtheorem{property}{Property}
\newtheorem*{theorem*}{Theorem}
\newcommand{\p}{{\rm P}\xspace}
\newcommand{\np}{{\rm NP}\xspace}
\newcommand{\apx}{{\rm APX}\xspace}
\newcommand{\Pos}{\textit{pos}\xspace}
\newcommand{\Neg}{\textit{neg}\xspace}
\newcommand{\clog}{\textit{clog}\xspace}
\newcommand{\poly}{\textit{poly}\xspace}
\setlist{topsep=0.5ex,itemsep=0.1ex}
\theoremstyle{definition}
\theoremstyle{definition}
\newtheorem{definition}[theorem]{Definition}
\newcommand{\prob}[3]{
\begin{description}
  \item[Name:] #1
  \item[Given:] #2
  \item[Question:] #3
\end{description}}
\title{On the Hardness of Fair Allocation under Ternary Valuations} 
\author{Zack Fitzsimmons \\ 
College of the Holy Cross \\ 
\texttt{zfitzsim@holycross.edu}\\
\and 
Vignesh Viswanathan \\
University of Massachusetts, Amherst\\
\texttt{vviswanathan@umass.edu} \\
\and
Yair Zick \\
University of Massachusetts Amherst\\
\texttt{yzick@umass.edu} \\
  }
\date{}
\begin{document}

\maketitle

\begin{abstract}
We study the problem of fair allocation of indivisible items when agents have ternary additive valuations --- each agent values each item at some fixed integer values $a$, $b$, or $c$ that are common to all agents. 
The notions of fairness we consider are max Nash welfare (MNW), when $a$, $b$, and $c$ are non-negative, and max egalitarian welfare (MEW). 
We show that for {\em any} distinct non-negative $a$, $b$, and $c$, maximizing Nash welfare is APX-hard --- i.e., the problem does not admit a PTAS unless $\p = \np$. 
We also show that for any distinct $a$, $b$, and $c$, maximizing egalitarian welfare is APX-hard except for a few cases when $b = 0$ that admit efficient algorithms. 
These results make significant progress towards completely characterizing the complexity of computing exact MNW allocations and MEW allocations. 
En route, we resolve open questions left by prior work regarding the complexity of computing MNW allocations under bivalued valuations, and MEW allocations under ternary mixed manna.
\end{abstract}

\section{Introduction}\label{sec:intro}

Fair allocation of indivisible items is a fundamental problem in computational social choice. 
We are given a set of indivisible \emph{items} that need to be distributed among \emph{agents} that have subjective \emph{valuations} for the items they receive. Many problems can be naturally cast as instances of the fair allocation problem. For example, one might wish to distribute a set of course seats to students, or schedule shifts to hospital workers. 
Our objective is to find an \emph{allocation} of items to agents satisfying certain natural \emph{justice criteria}. 
Unfortunately, when agents have arbitrary combinatorial valuations, several allocation desiderata are either computationally intractable to compute, or simply not guaranteed to exist (see e.g., \citet{Caragiannis2016MNW,pla-rou:j:efx}
as well as \citet{Aziz2022fairsurvey} for an overview). 

Thus, recent works study simpler classes of valuations where exact fair allocations can be computed. 
There is an efficient algorithm to compute Max Nash welfare allocations when agents have binary valuations \citep{barman2018pathtransfers,halpern2020binaryadditive,Babaioff2021Dichotomous}, i.e., where each item is valued at either $0$ or $1$. 
This result was later extended to bivalued additive valuations where each item is valued at $1$ or $c$ with $c$ being either an integer or a half-integer \citep{akrami2022halfintegers,akrami2022mnw}. 
Similarly, for the problem of allocating chores, there exists an efficient algorithm that computes leximin allocations when agents have binary costs \citep{barman2023chores} or bivalued costs when the ratio of the costs is $2$ \citep{lenstra1990chores}. 

Other works restrict their attention to bivalued instances in the realm of goods \citep{garg2021bivaluedadditive,amantidis2021mnw} as well as chores \citep{ebadian2022bivaluedchores,garg2022bivaluedchores,aziz2023twotypes,chakrabarty2015bivalued}. Generalizing beyond bivalued instances, much less is known about the complexity of fair allocation under ternary or trivalued instances: when each item is valued at $a$, $b$, or $c$ for some integers $a$, $b$, and $c$. For example, we do not know if an exact max Nash welfare allocation is efficiently computable when each item is valued at $0$, $1$, or $2$. Our goal in this paper is to bridge this gap by answering the following question:

\begin{quote}
\emph{What is the computational complexity of computing fair allocations under ternary valuations?} 
\end{quote}

\subsection{Our Results}

\begin{table*}[ht]
    \centering
    \begin{subtable}[c]{\linewidth}
        \centering
        \begin{tabular}{cc}
            \toprule
            Valuation Class & Additive \\
            \midrule
            % $\{0, 1\}$      &   $\mathbb{P}$ \citep{barman2018pathtransfers}    \\ 
            % \midrule
            \makecell{$\{a, b\}$ \\ $a = 0, 1$ or $2$}     & \textit{Poly} \citep{akrami2022mnw,akrami2022halfintegers,barman2018pathtransfers}       \\
            \midrule
            \makecell{$\{a, b\}$ \\ $b > a > 3$}     & APX-hard \citep{akrami2022mnw}       \\
            \midrule
            \makecell{$\{a, b\}$ \\ $b > a = 3$}     & \makecell{NP-hard \citep{akrami2022mnw} \\ \textbf{APX-hard} (\Cref{prop:3-c-case})}       \\
            \midrule
            \makecell{$\{0, 1, c\}$ \\ some large $c$}     & APX-hard \citep{garg2018budgetadditive}       \\
            \midrule
            \makecell{$\{a, b, c\}$ \\ $c > b > a \ge 0$}     & \makecell{\textbf{APX-hard} \\ (Theorems \ref{thm:mnw-first-apx-hard} and \ref{thm:mnw-second-apx-hard}) }     \\
            
            \bottomrule
        \end{tabular}
        \caption{Complexity of computing max Nash welfare allocations.}
    \end{subtable}%
    \\
    \begin{subtable}[c]{\linewidth}
        \centering
        \begin{tabular}{ccc}
            \toprule
            Valuation Class & Additive & Submodular\\
            \midrule
            $\{-1, 0\}$       & trivial & \textit{Poly} \citep{barman2023chores}      \\
            \midrule
            $\{-1, 0, 1\}$       & \textit{Poly} \citep{cousins2023mixedmanna} & \textbf{NP-hard} (\Cref{thm:submodular-hard})      \\
            \midrule
            \makecell{$\{-1, 0, c\}$ \\ $c > 1$}      &   \textit{Poly} \citep{cousins2023mixedmanna}  & \textbf{Poly} (\Cref{prop:submodular-orderneutral})     \\
            \midrule
            \makecell{$\{-2, 0, c\}$ \\ $c \ge 3$}       & open & open       \\
            \midrule
            \makecell{$\{a, b, c\}$ \\ all other cases} & \makecell{\textbf{NP-hard} \\ (Theorems \ref{thm:mixedmanna-mew-hard}, \ref{thm:two-negative} and \citep{cousins2023mixedmanna})} & \makecell{\textbf{NP-hard} \\ (Theorems \ref{thm:mixedmanna-mew-hard}, \ref{thm:two-negative} and \citep{cousins2023mixedmanna})} \\
            \bottomrule
        \end{tabular}
        \caption{Complexity of computing max egalitarian welfare allocations.}
    \end{subtable}
    \caption{Summary of our results. Here, $a$, $b$, and $c$ denote arbitrary distinct co-prime integers. Our contributions are highlighted in bold. By valuation class $\{a, b, c\}$, we mean instances where every item is valued at either $a$, $b$, or $c$ by all the agents.}
    \label{table:results}
\end{table*}

The question has been partially answered in the literature. \citet{garg2018budgetadditive} (and \citet{amantidis2021mnw}) show that when each item is valued at $0$, $1$, or $c$, computing a max Nash welfare allocation is APX-hard with a large enough $c$. 
The hardness results for computing max Nash welfare allocations under bivalued valuations \citep{akrami2022mnw,akrami2022halfintegers} also extends to some classes of trivalued valuations. 
Building upon these results, we offer a comprehensive analysis of the complexity of computing fair allocations when agents have ternary or trivalued valuations, i.e., items are valued at arbitrary integers $a$, $b$, or $c$ ($a < b < c$). 
A summary of our results is presented in \Cref{table:results}.

We study the objectives of maximizing Nash welfare and maximizing egalitarian welfare (also known as the Santa Claus objective~\citep{bansal2006santaclaus}). 
These are two of the most popular notions of fairness in the literature, and are extremely well studied. 
The Nash welfare of an allocation is defined as the product of agent utilities, and the egalitarian welfare of an allocation is defined as the utility of the worst-off agent.

We first study the \emph{all goods} setting; here, items have a non-negative marginal value for agents. We show that the problems of computing a max Nash welfare and a max egalitarian welfare allocation are APX-hard for any $a$, $b$, $c$ such that $0 \le a < b < c$ (\Cref{thm:mnw-first-apx-hard,thm:mnw-second-apx-hard}). 
This result completely characterizes the complexity of computing max Nash welfare allocations when agents have ternary valuations. 
Importantly, this result shows that even when agents have $\{0, 1, 2\}$ valuations, computing a max Nash welfare allocation is hard. 
A similar result almost completely characterizes the complexity of maximizing Nash welfare under bivalued valuations \citep{akrami2022mnw,akrami2022halfintegers}; 
these results, however, do not resolve the APX-hardness of the problem for the specific case when one of the values an item can have is $3$.\footnote{For every other case, \citet{akrami2022mnw} show that the problem is either APX-hard or admits an efficient algorithm.} 
We resolve this case as well, showing APX-hardness and completing their characterization under bivalued valuations (\Cref{prop:3-c-case}). 

% Next, we study the \emph{all chores} case; here, all items provide a non-positive marginal value to agents. 
% The complexity of computing egalitarian allocations is resolved by prior work \citep{lenstra1990chores, chakrabarty2015bivalued}. 
% It is known that unless each item is valued at one of two values $a$, $b$ ($a< b$) such that $a = 2b$, the problem is APX-hard. 
% The case where $a = 2b$ admits a polynomial time algorithm. 
% This implies that for all ternary values, unless $a = -2, b = -1$ and $c = 0$, the problem is APX-hard. 
% Note that this special case can be solved similar to the case where $a = 2b$, since any items valued at $0$ by one of the agents can be allocated to them without affecting the egalitarian welfare.

Next, we study the \emph{mixed manna} setting, where items can have both positive and negative marginal values. 
For the special case when $a = -1$, $b = 0$, and $c$ is an arbitrary integer, an efficient algorithm is known to compute max egalitarian welfare allocations \citep{cousins2023mixedmanna}. 
We show that generalizing beyond this case is unlikely by showing that computing a max egalitarian welfare allocation is \np-hard for almost every other $a, b$, and $c$. 

In line with the questions posed by \cite{Babaioff2021Dichotomous} and \cite{cousins2023bivalued},
we also ask the question of whether the results of \citet{cousins2023mixedmanna} can be generalized to submodular valuations. 
We find that apart from the case where $c = 1$, their results can in fact be generalized to submodular valuations (Proposition \ref{prop:submodular-orderneutral}). 
Somewhat surprisingly, for the special case where $a = -1, b= 0$, and $c = 1$, the problem of computing a max egalitarian welfare allocation is \np-hard when agents have submodular valuations (Theorem \ref{thm:submodular-hard}). 

\subsection{Additional Related Work}
Current known results on trivalued valuations consider special cases, e.g., algorithms when $b = 0$ \citep{cousins2023mixedmanna} or hardness when $c$ is much larger than $a$ and $b$ \citep{amantidis2021mnw,garg2018budgetadditive}. 

When all the items are {\em chores}, the complexity of computing egalitarian allocations is resolved by prior work \citep{lenstra1990chores, chakrabarty2015bivalued}. 
It is known that unless each item is valued at one of two values $a$, $b$ ($a< b$) such that $a = 2b$, the problem is APX-hard. 
The case where $a = 2b$ admits a polynomial time algorithm. 
This implies that for all ternary values, unless $a = -2, b = -1$ and $c = 0$, the problem is APX-hard. 
Note that this special case can be solved similar to the case where $a = 2b$, since any items valued at $0$ by one of the agents can be allocated to them without affecting the egalitarian welfare.

Aside from exact algorithms, a long line of fascinating work studies approximation algorithms for maximizing Nash welfare \citep{cole2015nash, Barman2018FindingFA, garg2021rado, li2021mnw,garg2023mnw,dobzinski2024subadditive} and egalitarian welfare \citep{bansal2006santaclaus, annamalai2017santaclaus,chakrabarty2009leximin, lenstra1990chores}. 
The current best known approximation ratios for maximizing Nash welfare are $1.45$ for additive valuations \citep{Barman2018FindingFA} and $4+\epsilon$ for submodular valuations \citep{garg2023mnw}. 
There is also a constant factor algorithm under subadditive valuations which uses a polynomial number of demand queries \citep{dobzinski2024subadditive}. 

There is a $0.5$-approximation algorithm for maximizing egalitarian welfare when all items are chores \citep{lenstra1990chores}, but there is no constant approximation for the all goods case.
However, the special case of the Santa Claus problem admits a $13$-approximation algorithm~\citep{annamalai2017santaclaus}.

\citet{lee:j:apx-hardness} show APX-hardness of the Max Nash welfare problem under general additive valuations, \citet{akrami2022mnw} show APX-hardness for some cases when agents have bivalued additive valuations, and \citet{garg2018budgetadditive} show APX-hardness when agents have $\{0, 1, c\}$ valuations for some large constant $c$. 

\section{Preliminaries}
For any $k \in \mathbb{N}$, we use $[k]$ to denote the set $\{1, 2, \dots, k\}$. 
We have a set of $n$ {\em agents} $N = [n]$ and $m$ {\em items} $O = \{o_1, \dots, o_m\}$. 
Each agent $i \in N$ has a {\em valuation function} $v_i:2^O \rightarrow \R$; $v_i(S)$ specifies agent $i$'s utility for the bundle of items $S$. 
We primarily assume {\em additive} valuations: 
a valuation function is additive if for all $S \subseteq O$, $v_i(S) = \sum_{o \in O} v_i(\{o\})$. 
For readability, we sometimes abuse notation and use $v_i(o)$ to denote $v_i(\{o\})$. 

We often assume that agents have restricted values for items. 
More formally, given a set $A \subseteq \mathbb{Z}$, agent $i$ has $A$-valuations if $v_i$ is additive and $v_i(o) \in A$ for all $o \in O$. 
Specifically, we often consider $\{a, b, c\}$-valuations for some integers $a$, $b$, and $c$. 
Throughout the paper, $a$, $b$, and $c$ will only be used to denote integers.
An {\em allocation} $X = (X_1, \dots, X_n)$ is an $n$-partition of the set of items $O$, where 
agent $i$ receives the bundle $X_i$. 
We require that every item is allocated to some agent. 
This constraint is required when items can be negatively valued by agents. 
The {\em utility} of agent $i$ under the allocation $X$ is $v_i(X_i)$.

\subsection{Fairness Notions (or Objectives)}
We consider two fairness objectives in this paper. 

\noindent\textbf{Max Nash Welfare (MNW):} the Nash welfare of an allocation $X$ is defined as the geometric mean of agent utilities $\NSW(v, X) = \left (\prod_{i \in N} v_i(X_i) \right )^{1/n}$. 
An allocation maximizing this objective value is referred to as the max Nash welfare allocation. 

\noindent\textbf{Max Egalitarian Welfare (MEW):} The egalitarian welfare of an allocation $X$ is defined as the minimum utility obtained by any agent in the allocation i.e. $\min_{i \in N} v_i(X_i)$. An allocation which maximizes this objective value is referred to as the max Egalitarian welfare allocation.

Since the max Nash welfare objective makes little sense when some agents have negative utilities, we only study it when all agents have non-negative utilities. Some of our proofs will also use the utilitarian social welfare of an allocation to establish some bounds on the Nash (or egalitarian) welfare. The {\em utilitarian social welfare} of an allocation $X$ is defined as the sum of agent utilities $\sum_{i \in N} v_i(X_i)$. 

\subsection{Approximation Algorithms and APX-hardness}
For some $\alpha > 1$, an algorithm is an $\alpha$-approximation algorithm for the max Nash welfare objective if it always outputs an allocation which has Nash welfare at least $\frac{1}{\alpha}$ of the optimal value. 
We similarly define an $\alpha$-approximation algorithm for the max egalitarian welfare.

Approximation algorithms are usually only defined when the objective value is either always positive or always negative. This is true of our fairness objectives in the all goods case. 
However, in the mixed goods and chores case, where items can have both positive and negative values, the optimal egalitarian welfare could be positive while many allocations could have a negative egalitarian welfare. 
Thus, in that case, we do not discuss approximability and only discuss \np-hardness and exact algorithms.

For most valuation classes, we show the hardness of computing fair allocations by proving \apx-hardness \citep{papadimitriou1991apx}. 
\apx-hard problems do not admit a Polynomial Time Approximation Scheme (PTAS) unless $\p = \np$. 
This is equivalent to saying that there exists an $\alpha > 1$ such that the problem does not admit an $\alpha$-approximation algorithm unless $\p = \np$.

The class \apx consists of all the problems which admit efficient constant factor approximation schemes. As mentioned in the related work section, existing results put the problem of maximizing Nash welfare in \apx for general additive valuations \citep{cole2015nash,Barman2018FindingFA}. 
These results show that our constant factor lower bounds, are in some sense tight. That is, it would be impossible to improve these lower bounds to $\Omega(\log n)$ or $\Omega(\poly(n))$; the tightest possible lower bounds are constant.

\section{The All Goods Case: \texorpdfstring{$0 \le a < b < c$}{0<=a<b<c}}\label{sec:goods}
We first consider the case where $0 \le a < b < c$ and all agents have $\{a, b, c\}$-valuations. 
It is known that MNW allocations can be computed efficiently when agents have $\{0, 1\}$-valuations \citep{barman2018pathtransfers}, $\{1, c\}$-valuations with $c > 1$ \citep{akrami2022mnw}, and $\{2, c\}$-valuations with $c$ odd and at least $3$ \citep{akrami2022halfintegers}. MEW allocations can be computed efficiently when agents have $\{0, 1\}$-valuations \citep{halpern2020binaryadditive, Babaioff2021Dichotomous}, and $\{1, c\}$-valuations with $c > 1$ \citep{akrami2022mnw, cousins2023bivalued}.

However, the complexity of the $\{0, 1, 2\}$ case is unknown. 
Our first result resolves this.
\begin{restatable}{theorem}{thmmnwfirstapxhard}\label{thm:mnw-first-apx-hard}
When agents have $\{a, b, c\}$-valuations with $0 \le a < b < c$ and $c \le 2b$, computing an MNW allocation is \apx-hard.
\end{restatable}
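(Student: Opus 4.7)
The plan is to reduce from a bounded-occurrence APX-hard CSP such as MAX-E3-SAT($B$), following the standard gap-preservation template. Given a CSP instance with $n$ variables and $m$ clauses, I would construct a fair-division instance such that YES instances (all clauses satisfiable) map to MNW value at least $W^{*}$, while NO instances (at most $(1-\delta)m$ clauses simultaneously satisfiable, for some fixed $\delta > 0$) map to MNW at most $(1-\varepsilon)W^{*}$ for some absolute $\varepsilon > 0$ depending only on $a,b,c$. A $(1+\varepsilon/2)$-approximation algorithm for MNW would then decide the gap version of the CSP.

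For the construction I would introduce one variable agent per variable $x_i$ and one clause item $y_j$ per clause $C_j$. For each variable $x_i$ I would build a pair of \emph{literal gadgets} — disjoint item sets that agent $i$ values at $c$ and every other agent values at $a$ — arranged so that agent $i$'s ``committed bundle'' attains the same total whether she is given the True gadget or the False gadget. The defining use of the hypothesis $c \le 2b$ is the local substitution it permits within a gadget: two $b$-items can replace one $c$-item in the bundle at no loss, which I would exploit to canonicalize allocations later. The clause item $y_j$ would be valued $c$ by every variable agent whose literal satisfies $C_j$ and $a$ otherwise, so routing $y_j$ to a satisfying literal's agent contributes a bonus of $c-a$. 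Finally, uniform padding items would pin down a strictly positive baseline for every agent, which is essential when $a = 0$ to keep the NSW from degenerating to zero.

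For completeness, a satisfying assignment yields the obvious allocation: give each agent her committed bundle for the chosen polarity and route each $y_j$ to one of its (existing) satisfying-literal agents, attaining a target NSW $W^{*}$ expressible in closed form in $a,b,c,n,m$. For soundness, I would first argue that any MNW-optimal allocation can be normalized without decreasing NSW so that each variable agent holds exactly one of her two committed bundles — this is where the $c \le 2b$ substitutability is used to reshuffle $b$- and $c$-items freely between agents — after which the chosen bundles induce a truth assignment. Any unsatisfied clause then forces $y_j$ to be given to an agent valuing it at only $a$, so a linear-in-$m$ number of agents each suffer a constant multiplicative loss in utility, and since the CSP is bounded-occurrence the number of agents is $\Theta(m)$, so the geometric mean drops by a factor bounded away from $1$.

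The principal obstacle is calibrating a single gadget that works uniformly across every admissible triple $(a,b,c)$ with $0 \le a < b < c \le 2b$. Two regimes are delicate: when $a = 0$, the construction must guarantee that every agent receives strictly positive utility in any allocation that is a candidate MNW optimum, since otherwise the NSW vanishes and the quantitative gap evaporates; when $a$ is comparable to $b$ or $c$, each agent's large baseline utility dwarfs the per-clause bonus $c-a$, shrinking the multiplicative effect of each unsatisfied clause on the geometric mean. The standard remedy is to replicate the clause items (and scale the committed bundles in tandem) so that the aggregate shortfall across agents remains a constant multiplicative factor. I therefore expect the proof to split into a small number of subcases — at minimum $a=0$ vs.\ $a>0$, and possibly $c=2b$ vs.\ $c<2b$ to handle tie-breaking when two $b$-items are exactly as valuable as one $c$-item — sharing a common gadget skeleton with light case-specific tuning of multiplicities.
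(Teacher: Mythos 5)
Your high-level template (gap-preserving reduction from a bounded-occurrence SAT variant, completeness via the obvious assignment, soundness via canonicalizing the MNW optimum) matches the paper, which reduces from Max-2P2N-3SAT. But the gadget and the case analysis diverge in ways that I think constitute a genuine gap, not just a presentational difference.

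\paragraph{The encoding mechanism is missing.} You propose a single variable agent $x_i$ with two disjoint literal gadgets, each valued at $c$ by agent $i$ alone and at $a$ by everyone else, ``arranged so the committed bundle attains the same total'' under either polarity. Nothing in this setup forces a binary choice: since both gadgets are worth $c$ only to agent $i$, MNW pressure will give her both (or some adversarial mixture), and no truth assignment is encoded. The paper's key structural idea is to split each variable into \emph{two} agents $\Pos_i$ and $\Neg_i$ sharing a single ``clog'' item that both value at $c$; exactly one of them receives the clog, and whichever agent does not is forced to live off her two literal items. This shared scarce resource is what induces an assignment, and it has no analogue in your construction.

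\paragraph{The case split is different, and yours does not align with the real difficulty.} You anticipate splitting on $a{=}0$ vs.\ $a{>}0$ and possibly $c{=}2b$ vs.\ $c{<}2b$. The paper's dichotomy is $b^2 \ge ca$ vs.\ $b^2 < ca$: when $b^2 \ge ca$ the $a$-items are ``cheap'' and the surplus literal items are offloaded to Type~II dummy agents who value them at $b$; when $b^2 < ca$ the $a$-items are ``valuable'' and the surplus is instead given to the unsatisfied-clause agents at marginal value $a$. This distinction is not about whether $a$ is zero but about whether $\tfrac{b}{a+b}$ or $\tfrac{c}{c+b}$ wins the exchange-argument inequality used to normalize bundle sizes. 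The $c{=}2b$ vs.\ $c{<}2b$ split you guess at does not appear and would not isolate the hard step.

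\paragraph{The soundness damage accounting does not go through as sketched.} You put the clause ``bonus'' on the variable agents, who already carry a large committed-bundle baseline; you notice that this dwarfs the per-clause loss and propose to fix it by replicating clause items and scaling committed bundles ``in tandem.'' Scaling both in proportion does not change the multiplicative loss and so does not help. The paper avoids this entirely by making clauses into \emph{agents} with a tiny baseline (one Type~I special item worth $b$), so that a satisfied clause has utility $2b$ and an unsatisfied one has utility $b$ or $b{+}a$: the per-unsatisfied-clause multiplicative hit is a fixed constant in $a,b,c$, and linearity of unsatisfied clauses in $m$ together with $m = \Theta(n)$ gives a constant geometric-mean gap. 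Related to this, uniform padding items (valued identically by all agents) are dangerous: MNW will use them to equalize utilities and wash out the CSP structure. The paper's special items are each valued high by \emph{exactly one} agent precisely to prevent this.

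In short, the template is right but the gadget needs (i) a paired agent per variable sharing a single high-value item to force a Boolean choice, (ii) clause agents with small baselines rather than clause items routed to variable agents, (iii) special items each valued high by a unique agent, and (iv) a case analysis on $b^2\lessgtr ca$ deciding where surplus literal items land. Without these, the normalization lemma you need for soundness does not hold.
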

\begin{proof}
We show \apx-hardness by using the following result from~\cite{ber-kar-sco:j:approx-hardness-max3sat}.

\begin{lemma}[\cite{ber-kar-sco:j:approx-hardness-max3sat}]
Given an instance of 2P2N-3SAT and any $\epsilon > 0$, it is \np-hard to decide if $(1-\epsilon)$
fraction of the clauses can be satisfied or if all solutions satisfy at most a $\frac{1015}{1016}+\epsilon$ fraction
of the clauses.
\end{lemma}
Given an instance $\phi(x_1, \dots, x_n)$ of 2P2N-3SAT with $m = 4n/3$ clauses $C_1, \dots, C_m$, we construct an instance
of an allocation problem with $11n$ items and $8n$ agents.

\medskip
\begin{description}[leftmargin= 0cm]
\item[Items:]
We have the following item types.
\begin{itemize}
    \item For each variable $x_i$ we have five items:
    \begin{itemize}
      \item $x_i$, $x'_i$ corresponding to the positive literals,
      \item $\overline x_i, \overline x'_i$ corresponding to the negative literals,
      \item a clog item $\clog_i$.
    \end{itemize}
   \item $2n$ Type I special items $d_1, \dots, d_{2n}$.
   \item $4n$ Type II special items $\hat{d}_1, \dots, \hat{d}_{4n}$.
 \end{itemize}
\item[Agents:]
We have the following agent types.
 \begin{itemize}
  \item For each variable $x_i$, we have:
  \begin{itemize}
    \item an agent $\Pos_i$ who values $x_i,x'_i$ at $b$ and $\clog_i$ at $c$.
    \item an agent $\Neg_i$ who values $\overline x_i,\overline x'_i$ at $b$ and $\clog_i$ at $c$.
  \end{itemize}
  \item For each clause $C_i$, we have an agent $C_i$ who
  values the items corresponding to the literals in $C_i$ at
  $b$ and the Type I item $d_i$ at $b$.
  \item We have a set of $2n-m$ Type I dummy agents: $s_1, \dots, s_{2n-m}$ where for each $i, 1 \le i \le 2n-m$, $s_i$ values all literal items and the Type I special item $d_{m+i}$ at $b$.
  \item We also have $4n$ Type II dummy agents: $t_1, \dots, t_{4n}$ where
  for each $i, 1 \le i \le 4n$, $t_i$ values the Type II special item $\hat{d}_i$ at $c$, and $t_i$ values all literal items at $b$.
  \end{itemize}
\end{description}
All unmentioned valuations are at $a$. 
Note that each of the Type I (resp. Type II) items are
valued by exactly one agent at $b$ (resp. $c$).

We now prove correctness of our reduction. We have two cases to handle in our proof.
\begin{description}[leftmargin=0cm]
    \item[Case 1: $b^2 \ge ca$:]
    ($\Longrightarrow$) Suppose there is an assignment $\sigma:\{1, \dots,n\} \rightarrow \{0, 1\}$ to the variables $x_1, \dots, x_n$ that satisfies at least $(1-\epsilon)$ of the clauses in $\phi$. 
    We construct an allocation % ${\cal X}$ 
    using this assignment.

We first allocate the items to the variable agents.
For each $i, 1 \le i \le n$, if $\sigma(x_i) = 1$ we allocate $\clog_i$ to $\Pos_i$ and $\overline x_i,\overline x'_i$ to $\Neg_i$; if $\sigma(x_i) = 0$ we allocate $\clog_i$ to $\Neg_i$ and ${x_i},{x'_i}$ to $\Pos_i$.
Thus, if $\sigma(x_i) = 1$, the agent $\Pos_i$ gets the clog item for a utility of $c$, and the agent $\Neg_i$ gets a utility of $2b$ from the two literal items assigned to it. 
%%%%
%% Variable agents get c or 2b
%%%%

%Each other agent has an item they uniquely value at $b$ (or $c$) and we simply allocate these Type I (Type II) items to the corresponding agents. 

For each clause $C_i$ that is satisfied by the assignment $\sigma$, we allocate exactly one copy of one of the literal items that satisfies that clause to the corresponding clause agent. 
For example, if $C_i = (\overline x_1 \vee x_2 \vee x_3)$ and $\sigma(x_1) = 0, \sigma(x_2) = 0,$ and $\sigma(x_3) = 1$ then we can allocate $\overline x_1$, $\overline x'_1$, $x_3$ or $x'_3$ to the clause agent $C_i$.
Thus, if $C_i$ is a satisfied clause, we allocate one literal item to its corresponding agent for a utility of $b$.
Each clause agent uniquely values a Type I special item at $b$, and we allocate this item to them. Overall, each clause agent corresponding to a satisfied clause receives a utility of $2b$ and each clause agent corresponding to an unsatisfied clause receives a utility of $b$.
%%%%
%% Satisfied clauses get 2b, unsatisfied clauses get b.
%%%%

So far, we have allocated $2n$ literal items to the variable agents, $\le m$ literal items to the satisfied clause agents, and $m$ of the Type I special items to their corresponding clause agents.
For each of the Type I dummy agents, we allocate one of the remaining literal items. 
There are at least $2n -m$ such literal items remaining so this is possible. 
Once this is done, notice that we have at most $\epsilon m$ literal items remaining since at least $(1-\epsilon)m$ clauses are satisfied by the assignment $\sigma$. 
To complete our allocation of the literal items, we iteratively allocate these $\epsilon m$ literal items 
to Type II dummy agents such that each Type II dummy agent does not receive more than one such item.
We have $2n-m$ remaining Type I items and $4n$ remaining Type II items. We allocate each remaining Type I item to its corresponding Type I dummy agent that values it at $b$, and each Type II item to its corresponding Type II dummy agent that values it at $c$.
%%%%
%% Type I get 2b 
%% m' of Type II get c+b
%% 4n-m' Type II get c
%%%%
Let us take stock of the utility of our agents at this stage. There are:
\begin{itemize}
    \item $n$ literal agents who received two literal items for a utility of $2b$.
    \item $n$ literal agents who received the clog item for a utility of $c$.
    \item $m - m'$ satisfied clause agents who receive one literal item for a utility of $b$ and their corresponding Type I item for a utility of $2b$.
    \item $m'$ unsatisfied clause agents who receive their corresponding Type I item (and no literal item) for a utility of $b$. 
    \item $2n- m$ Type I dummy agents who receive a literal item which they value at $b$, and their corresponding Type I item which they value at $b$, for a total utility of $2b$.
    \item $m'$ Type II dummy agents who receive a literal item and their corresponding Type II item which they value at $c$, for a total utility of $c+b$.
    \item $4n-m'$ Type II dummy agents who receive only their
    corresponding Type I item for a total utility of $c$.
\end{itemize}
Above, we use $m'$ to denote the number of unsatisfied clauses. We have the following NSW for our allocation.
 
\begin{align}
\NSW(v, X) = \left (\prod_{i \in N} v_i({X}_i) \right )^{\frac{1}{8n}} &= ((2b)^n c^n (2b)^{m-m'} b^{m'} (2b)^{2n-m} (c+b)^{m'} c^{4n-m'})^{\frac{1}{8n}} \notag\\
&= \left ((2b)^{3n} c^{5n} \left (\frac{b(c+b)}{2b \cdot c}\right )^{m'} \right )^{\frac{1}{8n}}\notag\\
&\ge \left ((2b)^{3n} c^{5n} \left (\frac{b(c+b)}{2b \cdot c}\right )^{\epsilon m} \right )^{\frac{1}{8n}}\label{eq:NSW-abc-goods-value}
\end{align}
The final inequality holds since $b(b+c) < 2b \cdot c$ and $m' \le \epsilon m$.

($\Longleftarrow$) For the other direction, consider an arbitrary Max Nash Welfare allocation $X$. 
We state some important properties of $X$ and upper-bound its Nash Welfare. 
These properties can be assumed without loss of generality; that is, these properties are satisfied by at least one max Nash welfare allocation $X$. 
At a high level, these properties show that $X$ should not allocate a value of $a$ to any agent, and subject to this constraint, must be as egalitarian as possible.  
\begin{property}\label{prop:all-positive}
If $X$ maximizes the NSW, then all agents receive at least one item that gives them
a positive utility.
\end{property}
\begin{proof}
If the allocation does not do this the Nash welfare is 0. However, it is easy to find an allocation with positive Nash welfare. The non-literal agents can get their special item and the literal agents can share their clog and literal items so that they get a positive utility.
\end{proof}

\begin{property}\label{prop:clog-assignment}
If $X$ maximizes the NSW, the item $\clog_i$ is allocated either to
$\Pos_i$ or $\Neg_i$.
\end{property}
\begin{proof}
Suppose that $\Pos_i$ did not receive $\clog_i$.
From Property \ref{prop:all-positive} we know that $\Pos_i$ is assigned at least one item that gives them a positive utility. 
If $\clog_i$ is allocated to a non-literal agent, they value the item at $a$ and so we can swap
$\clog_i$ with an item allocated to $\Pos_i$. 
This swap weakly improves the Nash welfare of the allocation.
\end{proof}

The same argument as above can be used to show the following property
concerning the special items.

\begin{property}\label{prop:special-items}
If $X$ maximizes the NSW, a special item is never allocated to an agent who
values it at $a$.
\end{property}

At this point, we have fixed the allocation of all the $6n$ special items and the $n$ clog items in $X$: special items go to agents who value them at $b$ or $c$, and clog items go to the literal agents. 
This leaves us only with the $4n$ literal items. 
Since there are $4n$ Type II dummy agents, if any non-Type II dummy agent has more than one item, there must be at least one Type II dummy agent who does not receive a literal item, and only receives their special item. 
We will use this property extensively in our analysis. 

\begin{property}\label{prop:literal-assignment}
If $X$ maximizes the NSW then for each $i \in [n]$,
\begin{enumerate}[(a)]
    \item if $\clog_i \in X_{\Pos_i}$, then $|X_{\Pos_i}| = 1$ and
$|X_{\Neg_i}| \le 2$.
    \item if $\clog_i \in X_{\Neg_i}$, then $|X_{\Neg_i}| = 1$ and
$|X_{\Pos_i}| \le 2$.
\end{enumerate}
\end{property}
\begin{proof}
We only show (a), as an analogous argument holds for (b). 
We assume that the literal agent $\Pos_i$ has the clog item $\clog_i$. If $\Pos_i$'s bundle has any other item, then some Type II dummy agent has a single item. We can move any item $o \in X_{\Pos_i} \setminus \{\clog_i\}$ to a Type II dummy agent $t_j$ such that $|X_{t_j}| = 1$ (i.e.,
$t_j$ is allocated only its special item).
This results in a weak improvement of Nash welfare since
\begin{align*}
    \frac{v_{\Pos_i}(X_{\Pos_i}) - v_{\Pos_i}(o)}{v_{\Pos_i}(X_{\Pos_i})} \ge \frac{c}{c+b} = \frac{v_{t_j}(X_{t_j})}{v_{t_j}(X_{t_j}) + v_{t_j}(o)},
\end{align*}
which implies 
\begin{align*}
    (v_{\Pos_i}(X_{\Pos_i}) - v_{\Pos_i}(o))(v_{t_j}(X_{t_j}) + v_{t_j}(o)) \ge v_{\Pos_i}(X_{\Pos_i})v_{t_j}(X_{t_j})
\end{align*}
To show the second part, assume $|X_{\Neg_i}| \ge 3$. 
By a similar argument, moving the least-valued item $o$ allocated to $\Neg_i$ to a Type II dummy agent $t_j$ who was only given one item weakly increases Nash welfare. This follows since
    \begin{align*}
        \frac{v_{neg_i}(X_{neg_i}) - v_{neg_i}(o)}{v_{neg_i}(X_{neg_i})} \ge \frac{2}{3} \ge \frac{b}{c+b} = \frac{v_{t_j}(X_{t_j})}{v_{t_j}(X_{t_j}) + v_{t_j}(o)},
    \end{align*}
    which implies
    \begin{align*}
    (v_{neg_i}(X_{neg_i}) - v_{neg_i}(o))(v_{t_j}(X_{t_j}) + v_{t_j}(o)) \ge v_{neg_i}(X_{neg_i})v_{t_j}(X_{t_j}).
\end{align*}
\end{proof}
We can use the second part of the proof of Property \ref{prop:literal-assignment} to show the following
\begin{property}\label{prop:bundle-size}
In any NSW maximizing allocation ${X}$, no agent has a bundle of size at least 3.
\end{property}
Properties \ref{prop:literal-assignment} and \ref{prop:bundle-size} use
the assumption that $2b \ge c$. 
We have not yet used the case assumption that $b^2 \ge ca$, which we do next.

\begin{property}\label{prop:utlity-b-vs-a}
If ${X}$ maximizes the NSW, then if any agent has an item that gives them a utility of
$b$, they do not have an item that gives them a utility of $a$.
\end{property}
\begin{proof}
Assume an agent $i$ has at least two items, and one of these items $o$ gives them a utility of $a$. 
Note that $i$ cannot be a dummy agent because of Property \ref{prop:special-items}: special items never go to agents that value them at $a$; thus, the item $o$ must be a literal item which offers them a utility of $b$. 
Since $i$ has at least two items in $X$, there must be a Type II dummy agent $t_j$ who receives exactly one item in $X$. 
We move the item $o$ to $X_{t_j}$. This weakly improves Nash welfare since: 
\begin{align*}
    \frac{v_{i}(X_{i}) - v_{i}(o)}{v_{i}(X_{i})} \ge \frac{b}{a+b} \ge \frac{c}{c+b} = \frac{v_{t_j}(X_{t_j})}{v_{t_j}(X_{t_j}) + v_{t_j}(o)}.
\end{align*}
The second inequality holds since $a \le \frac{b^2}{c}$.
This implies 
\begin{align*}
    (v_{i}(X_{i}) - v_{i}(o))(v_{t_j}(X_{t_j}) + v_{t_j}(o)) \ge v_{i}(X_{i})v_{t_j}(X_{t_j}).
\end{align*}
\end{proof}

\begin{property}\label{prop:typei-exactly-two}
If ${X}$ maximizes the NSW, all Type I dummy agents must receive exactly two items.
\end{property}
\begin{proof}
If there is a Type I dummy agent $s_i$ who receives exactly one item, then using Properties \ref{prop:literal-assignment} and \ref{prop:bundle-size}, there is at least one Type II dummy agent $t_j$ who receives a bundle of size $2$. This bundle $X_{t_j}$ must contain one literal item $o$. Moving $o$ to $X_{s_i}$ strictly improves Nash welfare.
\end{proof}
Note the key difference in language in the above lemma. We use the word `must' because if the above property is not satisfied, the allocation $X$ is not max Nash welfare. This stronger property will be used to prove the next property.

For the next property, we derive a truth assignment $\sigma$ for the 2P2N-3SAT instance $\phi$ from the allocation ${X}$.
If $\clog_i$ is allocated to $\Pos_i$ then
$\sigma(x_i) = 1$ otherwise $\sigma(x_i) = 0$.
Recall that all assignments to $\phi$ satisfy at most $(\frac{1015}{1016}+\epsilon$) fraction of the clauses, and so $\sigma$ satisfies some, but not all, of the clauses.

\begin{property}\label{prop:satisfied-clause}
In ${X}$, any clause $C_i$ that is satisfied by the
assignment $\sigma$ receives exactly one item corresponding to a copy
of a literal which satisfies it.
\end{property}
\begin{proof}
We know from Properties~\ref{prop:all-positive} and~\ref{prop:bundle-size} that $C_i$ is allocated either
one or two items.

If $|X_{C_i}| = 2$ and the property does not hold,
then $X_{C_i}$ must have a copy of a literal item $o$ which
satisfies it but is not part of the assignment (e.g., $C_i$
receives $\overline x_j$, but $\sigma(x_j) = 1$). 
This follows from Property \ref{prop:utlity-b-vs-a}.

There must be at least one literal (and so two literal items)
which satisfy the clause, and only one other clause which can be
satisfied by this literal. Therefore one of the items
corresponding to a literal which satisfies the clause is allocated to one of the dummy agents. 
Swapping the item $o$ with the dummy agent achieves the required lemma while keeping
Nash Welfare unchanged.

If $|X_{C_i}| = 1$, then by the same argument we know that one
of the items corresponding to a literal which satisfies the clause is allocated to one of the dummy agents. 
We can move this item to $X_{C_i}$ which will weakly improve Nash Welfare. 
Note that this swap may violate Property \ref{prop:typei-exactly-two} but if that happens, then we can conclude using the stronger statement of Property \ref{prop:typei-exactly-two} that $X$ was not a max Nash welfare allocation to begin with. 
\end{proof}

\begin{property}\label{prop:unsatisfied-clause}
In ${X}$, for any clause $C_i$ not satisfied by the
assignment $\sigma$, the corresponding clause agent receives exactly 
one item.
\end{property}

\begin{proof}
Assume that $C_i$ is not satisfied by the assignment $\sigma$ and that $|X_{C_i}| = 2$. 
By Property~\ref{prop:utlity-b-vs-a} the clause agent is given no items it values at $a$, and thus, its items provide a value of $b$. 
Thus, $C_i$'s bundle contains the clause agent's unique dummy item, and one literal item $x_j$.
Since this clause is not satisfied by the assignment $\sigma$, $X_{\Pos_j}$ must have either one item or two items where both have value $a$.

If the first case holds, we simply move $x_j$ to $X_{\Pos_j}$ resulting
in a weak improvement in Nash welfare. 
This transfer could potentially result in agent $\Pos_j$ violating Property \ref{prop:utlity-b-vs-a}. 
If this happens, transferring items according to the proof of Property \ref{prop:utlity-b-vs-a} resolves this without violating this Property.

If the second case holds, we swap one of the items in
$X_{\Pos_j}$ with $x_j$ resulting in a strict improvement in
Nash Welfare (since $(b+a)\cdot (b+a) > 2a \cdot 2b$), contradicting the fact that $X$ is a max Nash welfare allocation.
\end{proof}

\begin{property}\label{prop:clog-literal-encoding}
Assume $b^2 \ge ac$. If ${X}$ maximizes the NSW then for each $i \in [n]$,
\begin{enumerate}[(a)]
    \item if $\clog_i \in {X}_{\Pos_i}$, then
$|{X}_{\Neg_i}| = 2$ and $v_{\Neg_i}({X}_{{\Neg}_i}) = 2b$, and
    \item if $\clog_i \in {X}_{\Neg_i}$, then
$|{X}_{\Pos_i}| = 2$ and $v_{\Pos_i}({X}_{\Pos_i}) = 2b$.
\end{enumerate}
\end{property}
\begin{proof}
We prove part (a) here; part (b) follows similarly. 
If the Property does not hold, from Properties \ref{prop:bundle-size} and \ref{prop:utlity-b-vs-a}, $\Neg_i$ must have either one item or two items where both items are valued at $a$.

From Properties \ref{prop:unsatisfied-clause} and \ref{prop:utlity-b-vs-a}, the literal items $\overline x_i$ and $\overline x'_i$ that do not belong to $\Neg_i$ must belong to dummy agents.

If $\Neg_i$ has two items where both items are valued at $a$, we swap one of these items with $\overline x_i$ and strictly improve on the Nash welfare (since $(b+a)\cdot 2b > 2a \cdot 2b$). 
This contradicts the fact that $X$ is a max Nash welfare allocation and shows this case cannot occur.

If $\Neg_i$ receives one item and the item has utility $a$, moving $\overline x_i$ to $X_{\Neg_i}$ strictly improves Nash welfare again (since $(b+a)\cdot b > a \cdot 2b$), so this cannot happen as well. 

The only case left is when $\Neg_i$ receives one item and the item has utility $b$. 
Assume without loss of generality that this item is $\overline x'_i$. We move $\overline x_i$ to $\Neg_i$ and weakly improve the Nash welfare. 
If $\overline x_i$ originally belonged to a Type I dummy agent, this contradicts Property \ref{prop:typei-exactly-two} which means that $X$ was not max Nash welfare to begin with. 
Otherwise, the transfer weakly improves Nash welfare without violating any of the other properties.
\end{proof}
Crucially, it can be seen that there exists a Nash welfare maximizing allocation $X$ that satisfies all ten properties. 
This is because we ensure all the transfers outlined in the proofs do not violate any other property. 
This allows us to use all ten properties to find the max Nash welfare in our instance.
\begin{enumerate}[(i)]
    \item We first handle the literal agents. For each $i \in [n]$, $v_{\Pos_i}({X}_{\Pos_i})v_{\Neg_i}({X}_{\Neg_i}) = c \cdot 2b$
    by Properties~\ref{prop:literal-assignment} and~\ref{prop:clog-literal-encoding}.
    \item For each Type I dummy agent $s_i$, $v_{s_i}({X}_{s_i}) = 2b$ by Properties~\ref{prop:special-items}, \ref{prop:utlity-b-vs-a}, and~\ref{prop:typei-exactly-two}.
    \item For each clause agent $C_i$, if it is satisfied by the assignment
    $\sigma$, $v_{C_i}({X}_{C_i}) = 2b$ by Properties~\ref{prop:utlity-b-vs-a} and~\ref{prop:satisfied-clause}, and otherwise $v_{C_i}({X}_{C_i}) = b$ by Properties~\ref{prop:special-items} and~\ref{prop:unsatisfied-clause}.
    \item Let $m'$ be the number of unsatisfied clauses using the
    assignment $\sigma$. From the above statements we
    can see that the utility of all other agents is set and $m'+4n$ items remain. It is straightforward to see that the best way to allocate these items is to allocate a utility of $b+c$ to $m'$ of the Type II dummy agents with the remaining having utility $c$. This handles allocating
    all of the remaining items.
\end{enumerate}

We can now calculate the Nash welfare of our allocation ${X}$.

\begin{align*}
\NSW(v, X) = \left (\prod_{i \in N} v_i({X}_i)\right )^{\frac{1}{8n}} &= (c^{n} (2b)^{n} (2b)^{2n-m} (2b)^{m-m'} b^{m'} (b+c)^{m'}c^{4n-m'})^{\frac{1}{8n}}\\
% &= (c^{5n-m'}(2b)^{3n-m'}b^{m'}(b+c)^{m'})^{\frac{1}{8n}}\\
&= \left(c^{5n} (2b)^{3n} \left(\frac{b(c+b)}{c(2b)}\right)^{m'}\right)^{\frac{1}{8n}}\\
&\le \left(c^{5n} (2b)^{3n} \left(\frac{b(c+b)}{c(2b)}\right)^{\left(\frac{1}{1016}-\epsilon \right)m}\right)^{\frac{1}{8n}}
\end{align*}
Our last inequality follows from $m' \ge \left (\frac{1}{1016} - \epsilon \right )m$.

We can calculate our final approximation lower bound by dividing the two
Nash welfares to yield the following lower bound.

\[
\left (\left (\frac{b(c+b)}{c(2b)}\right)^{\left (\frac{-1}{1016}+2\epsilon \right )m}\right)^{\frac{1}{8n}}
\]

Note that since our allocation instance was constructed from a boolean
formula in 3CNF where each variable occurs twice as a positive literal and
twice as a negative literal, $m = 4n/3$. Thus we can restate our lower bound
as the following constant value.

\[
\left (\frac{2bc}{b(c+b)}\right)^{\left(\frac{1}{1016}-2\epsilon\right)\frac{1}{6}}
\]

The above statement evaluates to $1.00004$
for $\{0,1,2\}$ valuations with
a small enough $\epsilon$.
\item[Case 2: $b^2 < ca$:]
The main idea in this case is that $a$ is considered a high value; so, instead of adding items to the dummy
agents, we add them to the clause agents corresponding to unsatisfied clauses who value them at $a$. 
% In this case we can see that $a$ is considered
% a high value.

($\Longrightarrow$) Suppose there is an assignment $\sigma$
which satisfies $(1-\epsilon)m$ clauses.
We follow the same allocation as in the other case except that we now give
the ``extra'' $m'$ items to the agents corresponding to unsatisfied clauses.
This gives these agents a utility of $b+a$ and the following overall Nash welfare.

\begin{align*}
    \NSW(v, X) &= (c^{5n}(2b)^{3n-m'}(b+a)^{m'})^{\frac{1}{8n}} = \left (c^{5n}(2b)^{3n} \left (\frac{b+a}{2b} \right )^{m'} \right )^{\frac{1}{8n}} \\
    &\ge \left (c^{5n}(2b)^{3n} \left (\frac{b+a}{2b} \right )^{\epsilon m} \right )^{\frac{1}{8n}}
\end{align*}
The last inequality follows from $m' \le \epsilon m$.

($\Longleftarrow$) For the other direction, we again suppose that all assignments to
$\phi$ satisfy at most $(\frac{1015}{1016} + \epsilon)m$
clauses and consider an arbitrary max Nash welfare allocation ${X}$.

First notice that Properties~\ref{prop:all-positive}, \ref{prop:clog-assignment}, \ref{prop:special-items}, \ref{prop:literal-assignment}, and~\ref{prop:bundle-size} hold in this case as well. We will have an
alternative property to Property~\ref{prop:utlity-b-vs-a}, since there will
be cases here where agents will be allocated items they value at $a$.

\begin{property}\label{prop:other-typeii-dummy}
In ${X}$, all Type II dummy agents $t_i$ must have $|{X}_{t_i}| = 1$.
\end{property}
\begin{proof}
Suppose that $|X_{t_i}| > 1$ for some Type II dummy agent $t_i$. 
Then there must be an agent who does not receive $\clog_k$ for some $k$ and is not a type II dummy agent, who receives a bundle of size $1$. 
This comes from the fact that there are $3n$ agents who are neither type II dummy agents nor do they receive $\clog_k$ for some $k$, and there are $3n$ more items than agents. 
Let this agent be $j$. Moving a literal item $o$ from $X_{t_i}$ to $X_j$ strictly improves Nash welfare since:
\begin{align*}
    \frac{v_{t_i}(X_{t_i}) - v_{t_i}(o)}{v_{t_i}(X_{t_i})} \ge \frac{c}{b+c} > \frac{b}{a+b} \ge \frac{v_j(X_j)}{v_j(X_j) + v_j(o)}
\end{align*}
The strict inequality follows from $c > \frac{b^2}{a}$.
%
% It is straightforward to see that moving an item to another agent who receives only one item with value less than $c$ would strictly increase Nash welfare. This is a contradiction.
\end{proof}

Note again the stronger language used in the statement. 
If the above property is not satisfied, then $X$ is not a max Nash welfare allocation. 
It is also worth noting that the proofs of Properties \ref{prop:literal-assignment} and \ref{prop:bundle-size} use transfers that violate this property but weakly improve Nash welfare. 
This only shows that the Properties \ref{prop:literal-assignment} and \ref{prop:bundle-size} must be satisfied in any max Nash welfare allocation $X$, much like Property \ref{prop:other-typeii-dummy}.

Using Properties  \ref{prop:literal-assignment}, \ref{prop:bundle-size}, and \ref{prop:other-typeii-dummy}, we get that any agent who either receives $\clog_i$ or is a type II dummy agent, must receive exactly one item. 
This leaves us with $3n$ agents and $6n$ items with agent bundles upper bounded at size two (Property \ref{prop:bundle-size}). 
This means that all these $3n$ agents must receive exactly two items. 

Finally, we show an analog to Property \ref{prop:clog-literal-encoding}, showing that the variable agent that does not receive a clog item receives two items of value $b$ each.

% The remaining properties from the other case hold with similar arguments as
% used in the case where $b^2 \ge ca$. The main idea behind new arguments is
% that agents with $c$-valued items have exactly one item and all other agents have exactly two items.
% \zfnote{We could further explain how these properties hold as stated in your
% notes, but this may read better.}

\begin{property}\label{prop:other-clog-literal-encoding}
Assume $b^2 < ac$. In any NSW maximizing allocation ${X}$, for each $i \in [n]$,
\begin{enumerate}[(a)]
    \item if $\clog_i \in {X}_{\Pos_i}$, then
$|{X}_{\Neg_i}| = 2$ and $v_{\Neg_i}({X}_{\Neg_i}) = 2b$, and
    \item if $\clog_i \in {X}_{\Neg_i}$, then
$|{X}_{\Pos_i}| = 2$ and $v_{\Pos_i}({X}_{\Pos_i}) = 2b$.
\end{enumerate}
\end{property}
\begin{proof}
We prove part (a) here; part (b) follows similarly. In this case $\Neg_i$ must have two items that it values at $b$. 
If the property is not satisfied, at least one of these items must provide a utility of $a$. 
The items $\overline x_i$ and $\overline x'_i$ that do not belong to $\Neg_i$ must belong to other clause or Type I dummy agents.

If both items in $X_{\Neg_i}$ provide a utility of $a$, then we swap one of these items with $\overline x_i$ or $\overline x_i'$. 
It is easy to see that this transfer strictly improves Nash welfare, and so this case cannot happen.

Assume one of the items $o$ in $X_{\Neg_i}$ provides $\Neg_i$ a utility of $a$. 
Without loss of generality, assume $\overline x_i$ belongs to some other agent. 
It is easy to see that swapping $\overline x_i$ with $o$ weakly improves Nash welfare.
\end{proof}

Consider the assignment $\sigma$ such that if $\clog_i \in X_{\Pos_i}$, $\sigma(x_i) = 1$; and $\sigma(x_i) = 0$ otherwise. 
The above property implies that if any clause which is not satisfied by the assignment $\sigma$ receives a utility of $b+a$ since all the literal items the clause values at $b$ is allocated to the corresponding literal agents.

These properties allow us to upper bound the max Nash welfare, very similarly to the previous case.
\begin{align*}
\NSW(v, X) = \left(\prod_{i \in N} v_i({X}_i)\right)^{\frac{1}{8n}} &= \left (c^{5n}(2b)^{3n-m'}(b+a)^{m'} \right )^{\frac{1}{8n}}\\
 &= \left (c^{5n}(2b)^{3n}\left (\frac{b+a}{2b} \right )^{m'}\right )^{\frac{1}{8n}}\\
 &\le \left(c^{5n}(2b)^{3n}\left (\frac{b+a}{2b} \right )^{\left (\frac{1}{1016}-\epsilon \right )m}\right)^{\frac{1}{8n}}
\end{align*}
We can now calculate the lower bound by taking the ratio of our calculated Nash welfares.
\begin{align*}
\left(\frac{2b}{b+a}\right)^{\left (\frac{1}{1016}-2\epsilon \right )m\frac{1}{8n}} = \left(\frac{2b}{b+a}\right)^{\frac16\left (\frac{1}{1016}-2\epsilon \right)}
\end{align*}
Notice that this is a constant. This completes our proof.
\end{description}
\end{proof}

Recently, \citet{jain2024matching} studied the problem of maximizing Nash welfare under two sided preferences and show that the problem is \np-hard under $\{0, 1, 2\}$ valuations and capacity constraints. Since fair allocation is a special case of their problem, Theorem \ref{thm:mnw-first-apx-hard} presents an improved hardness result for their problem since we show \apx-hardness and eliminate the need for capacity constraints. 

Our next result resolves the case when $2b < c$. It may be possible to use a similar 2P2N-3SAT construction in this case as well but our proof uses a much simpler vertex cover based reduction. Specifically, we reduce from the following problem.

\begin{restatable}{lemma}{lemmaxthreevchardness}\label{lem:max-3-vc-hardness}
There exists a constant $\gamma \in (0, 1)$ such that, given a $3$-regular graph $G$ and an integer $k$, it is \np-hard to decide if $G$ has a vertex cover of size $k$ that covers all edges or all subsets of nodes of size $k$ cover at most a $(1-\gamma)$ fraction of the edges of $G$.   
\end{restatable}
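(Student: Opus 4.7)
The plan is to derive the desired gap by reducing from the well-known \apx-hardness of Minimum Vertex Cover on $3$-regular (cubic) graphs. A standard result (via the PCP theorem plus local gadgetry preserving cubicity) guarantees constants $\delta > 0$ and $c \in (0,1)$ such that, given a $3$-regular graph $G$ on $n$ vertices, it is \np-hard to distinguish the case $\tau(G) \le cn$ (YES) from $\tau(G) \ge (1+\delta)\,cn$ (NO), where $\tau(G)$ denotes the minimum vertex cover size. My reduction will simply pass the pair $(G, k := \lceil cn \rceil)$ to the partial vertex cover gap problem stated in the lemma.

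On the YES side, a minimum vertex cover of size at most $k$ covers all $m = 3n/2$ edges, and padding it up to size exactly $k$ with arbitrary extra vertices preserves this property, so some $k$-subset covers every edge.

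On the NO side, I will argue that every $k$-subset $S$ leaves a constant fraction of the edges uncovered. Fix any such $S$ and set $T := V \setminus S$. The edges not covered by $S$ are exactly the edges of the induced subgraph $G[T]$. Since $S$ together with a minimum vertex cover of $G[T]$ is itself a vertex cover of $G$, I get $\tau(G[T]) \ge \tau(G) - k \ge \delta k$. The key step is to convert this into a lower bound on $|E(G[T])|$: using the standard inequalities $\tau(H) \le 2\mu(H) \le 2|E(H)|$, where $\mu(H)$ is the matching number, I obtain
\[
|E(G[T])| \;\ge\; \tau(G[T])/2 \;\ge\; \delta k / 2 \;\ge\; \delta c n / 2.
\]
Dividing by $m = 3n/2$, the fraction of uncovered edges is at least $\delta c / 3$, so setting $\gamma := \delta c / 3$ yields the claimed gap.

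The main obstacle I anticipate is making sure the threshold $k$ in the starting hardness instance is $\Theta(n)$, since a sublinear $k$ would let the denominator $m = \Theta(n)$ swamp the numerator and collapse the gap. Happily, this is automatic for cubic graphs: $n/2 \le \tau(G) \le 2n/3$ forces any meaningful hardness threshold to be $\Theta(n)$. If a self-contained proof were preferred to citing the cubic Minimum Vertex Cover hardness result directly, one could instead reduce from a gap version of Max-E3SAT-$B$ via the classical clause-variable gadget, which by construction outputs both a $3$-regular graph and a $k = \Theta(n)$, and the rest of the argument goes through verbatim.
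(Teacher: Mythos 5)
Your argument is correct and follows essentially the same route as the paper: both reduce from the known inapproximability of Minimum Vertex Cover on cubic graphs (Chleb\'ik–Chleb\'ikov\'a), and both exploit the key observation that any $k$-set $S$ can be completed to a full vertex cover by adding one endpoint of each uncovered edge, so $\tau(G) \le k + (\text{number of uncovered edges})$; combined with $\tau(G) \ge |V|/2$ from $3$-regularity, this yields a constant gap. The one place you deviate is the detour through $\tau(H) \le 2\mu(H) \le 2|E(H)|$: the paper uses $|E(G[T])| \ge \tau(G[T])$ directly (take one endpoint per edge of $G[T]$ to get a vertex cover of $G[T]$), which avoids the unnecessary factor of $2$ and gives a constant twice as large (the paper gets $\gamma = 1/297$, you would get $\gamma = 1/594$). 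Also note that the standard gap result gives an instance-dependent threshold $k^*$ rather than a universal constant $c$ with threshold $cn$; you don't actually need the latter, since $k^* \ge |V|/2$ for cubic graphs suffices to make $k^*$ a constant fraction of $|E| = 3|V|/2$.
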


This follows from applying the arguments of \citet{pet:j:max-vertex} to the min vertex cover hardness result of \citet{chlebik2006minvc}. The proof can be found in Appendix \ref{apdx:goods}.

\begin{theorem}\label{thm:mnw-second-apx-hard}
When agents have $\{a, b, c\}$-valuations with $0 \le a < b < c$ and $2b < c$, computing an MNW allocation is \apx-hard.
\end{theorem}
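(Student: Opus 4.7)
The plan is to reduce from the gap vertex-cover problem on cubic graphs given in \Cref{lem:max-3-vc-hardness}. Given a $3$-regular graph $G=(V,E)$ with $n$ vertices, $m=3n/2$ edges, and a target vertex cover size $k$, I would build a fair allocation instance whose maximum Nash welfare cleanly separates the YES and NO cases by a multiplicative constant. The construction introduces, for each vertex $v$, three ``vertex items'' $x_v^1, x_v^2, x_v^3$ (one per incident edge), together with $k$ ``padding items'' $p_1,\dots,p_k$, plus auxiliary items so that every agent can hold at least one item. The vertex agent $A_v$ values each $x_v^j$ and every padding item at $c$ and every other item at $a$. The edge agent $E_e$ for $e=(u,v)$ values the two ``incidence'' items $x_u^{j_u(e)}$ and $x_v^{j_v(e)}$ at $b$ and every other item at $a$, where $j_v(e)$ denotes the index of $e$ in $v$'s incidence list. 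The crucial property is $2b<c$: a vertex agent strictly prefers to keep a $c$-valued item rather than trade it down to a $b$-valued item on an edge agent, unless that edge agent would otherwise fall back to $a$-level utility.

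For the YES direction, given a vertex cover $S$ of size $k$ I would allocate one padding item to each $A_v$ with $v\in S$, release all three of $v$'s items to the corresponding edge agents, and let each $A_v$ with $v\notin S$ retain all three of its items. Since $S$ is a vertex cover, every edge agent receives at least one $b$-valued item, and a direct calculation gives the resulting Nash welfare an explicit value $W^\star$.

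For the NO direction, I would consider any allocation and let $V_0$ be the set of vertex agents that release all three of their own items; each such agent requires a padding item, so $|V_0|\le k$. Because no size-$k$ vertex cover exists, $V_0$ misses at least $\gamma m$ edges, and the corresponding edge agents must be served $b$-valued items drawn from non-$V_0$ vertex agents---otherwise they collapse to $a$-utility and Nash welfare drops immediately by a constant factor. Each such forced partial release costs a non-$V_0$ vertex agent a $c$-item while giving some edge agent a $b$-item, and because $c>2b$ the multiplicative effect on the product of the two involved utilities is bounded below $1$ by a constant depending only on $a,b,c$. Aggregating over the $\Omega(n)$ forced releases yields the desired constant-factor gap in Nash welfare and establishes \apx-hardness.

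The principal obstacle is robustness of the NO-case analysis: max-NSW might adopt intermediate strategies---spreading partial releases across many non-$V_0$ agents, leaving some edges uncovered while compensating elsewhere, or using padding non-uniformly---and the exchange argument must rule every such workaround out uniformly. This is exactly where the inequality $c > 2b$ is used crucially, since it ensures that the cumulative per-release multiplicative penalty never vanishes regardless of the intermediate strategy, so the $\Omega(\gamma m)$ forced releases in the NO case collectively translate to a constant-factor loss in the geometric mean.
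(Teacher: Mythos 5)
Your proposal uses the same high-level strategy as the paper---a reduction from the gap version of vertex cover on $3$-regular graphs (\Cref{lem:max-3-vc-hardness})---but with a genuinely different gadget, and unfortunately the gap you flag yourself is the one that sinks it. You introduce one agent per vertex and one agent per edge, with vertex agents valuing their own triples (and padding items) at $c$ and edge agents picking up those triples at $b$. The paper instead uses one item per edge (valued at $b$ by both endpoint node agents), $k$ vertex cover items valued at $c$ by everyone, and a carefully counted pool of dummy agents with their own special $b$-items; there are no edge agents at all. The practical difference is that in the paper's construction, a YES instance yields an allocation in which every agent has utility exactly $c$ or exactly $3b$ --- two values only --- so one can pass to the NO case via a tight structural lemma (no agent has two $c$-items, no agent has four items, a $c$-item comes alone) and count exactly how many agents must fall below $3b$. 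In your construction the YES allocation already spreads agents across four utility levels ($c$, $3c$, $2b$, $b$), which makes even the YES benchmark awkward and makes the upper bound in the NO case much harder to obtain.

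The concrete gap is the NO-case upper bound, and it is not a matter of routine tidying. You write ``each such forced partial release costs a non-$V_0$ vertex agent a $c$-item while giving some edge agent a $b$-item, and because $c>2b$ the multiplicative effect \ldots is bounded below $1$ by a constant,'' but this is not true as stated: if a vertex agent holding $3c$ gives up one $c$-item to an edge agent currently at $a$, the product of their utilities changes by a factor $\frac{2c}{3c}\cdot\frac{a+b}{a}$, which can be much larger than $1$ (exactly because $a$ is small and $b$-valued edges are desperate for items). So the MNW optimum does not obviously avoid such releases; it may embrace them. Ruling out all such ``intermediate strategies'' is precisely the hard content you postpone, and without it you have not bounded the NSW in the NO case at all. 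The paper avoids this by (i) having only one $c$-item per vertex-cover slot and one $b$-item per edge, so every agent's target bundle is exactly $c$ or exactly three $b$'s, (ii) replacing $a$ by $a' = \max\{a, 2b^2/c, 2b/3\}$ to make local exchange arguments uniform, and (iii) proving a lemma that pins the structure of an MNW allocation tightly enough that a simple count of uncovered edges directly translates into at least $\gamma|E|/3$ agents at utility at most $2b+a'$. None of those controls is available in your construction as written, and the unspecified ``auxiliary items'' do not obviously supply them. As it stands, you have the right starting point but not a proof.
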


\begin{proof}

We reduce from Lemma \ref{lem:max-3-vc-hardness}. We are given a $3$-regular graph $G = (V, E)$ and an integer $k$. Note that since the graph is $3$-regular, $|E| = \frac{3|V|}{2}$, and $k$ must be at least $\frac{|V|}{2}$ for this problem to be non-trivial. This is because we need at least $\frac{|V|}{2}$ nodes to cover $\frac{3|V|}{2}$ edges in a $3$-regular graph.

We construct a fair allocation instance with $3k - 0.5|V|$ agents and $7k - 1.5|V|$ items.

The $7k - 1.5|V|$ items are defined as follows:
\begin{enumerate}[(a)]
    \item For each each edge $e_{ij} \in E$ we have an item $e_{ij}$,
    \item We have $k$ {\em vertex cover} items $c_1, \dots, c_k$, and 
    \item We have $6k - 3|V|$ special items.
\end{enumerate}

The $3k - 0.5|V|$ agents have valuations defined as follows:
\begin{description}[leftmargin=0cm]
    \item[Node Agents:] For each node $i \in V$, we have an agent who values the edges incident on it at $b$ and
    \item[Dummies:] We have $3k - 1.5|V|$ dummy agents, who value edge items at $b$, and exactly two special items each at $b$. Since there are $6k-3|V|$ special items, we can ensure that no two dummy agents value the same special item at $b$.
\end{description}
The vertex cover items are valued at $c$ by both the node and dummy agents. 
All unmentioned values are at $a$.

We now prove correctness of our reduction.

Assume the graph admits a vertex cover (say $S$) of size $k$. We allocate the $k$ vertex cover items to the agents in $S$. All other node agents receive the edge items corresponding to the three edges they are incident on. 
This is feasible since at least one endpoint of each edge belongs to the vertex cover $S$.

At this point, we only have to allocate the special items, and perhaps some edge items (if both endpoints of some edge belong to the vertex cover $S$).
We allocate the remaining items to the dummy agents. Each dummy agent receives their two special items and a single unassigned edge item; this ensures all $7k - 1.5|V|$ items are assigned.

The $k$ agents in the vertex cover $S$ have a utility of $c$, and the remaining $|V|-k$ node agents have a utility of $3b$. The $3k-1.5|V|$ dummy agents have a utility of $3b$ as well. 
Thus, the Nash welfare of this allocation is 
\begin{align*}
    \NSW(v, X) = \left (c^k (3b)^{(2k - 0.5|V|)} \right )^{\frac{1}{3k - 0.5|V|}} && (\NSW^+)
\end{align*}

For the other direction, assume that no subset of nodes of size $k$ covers more than $(1-\gamma)|E|$ edges. 
For this case, we slightly tweak the valuation function to make our analysis easier. Specifically, we change the valuation functions such that all valuations at $a$ are replaced with $a'$ such that $a' = \max \{a, \frac{2b^2}{c}, \frac{2b}{3}\}$. $a'$ may no longer be an integer but it is guaranteed to be less than $b$ since $2b < c$. We refer to this new valuation profile using $v'$. Crucially, since we only increased agent valuations, we must have, for any allocation $X$, $\NSW(v, X) \le \NSW(v', X)$. 
Thus, any upper bound on the max Nash welfare under the valuations $v'$ also bounds the max Nash welfare under the valuations $v$. 
To upper bound the Nash welfare, we examine the MNW allocation $X$.

\begin{lemma}\label{lem:vc-properties}
There exists an MNW allocation $X$ with the
following properties.
\begin{enumerate}
\item No agent receives more than one vertex cover item.
\item An agent who receives a vertex cover item does not receive any other item in $X$.
\item No agent receives four or more items.
\end{enumerate}
\end{lemma}

\begin{proof}
We separately consider each property of $X$ stated in the lemma.

\begin{enumerate}
\item If an agent has two vertex cover items, there must be some agent $j$ who does not receive a vertex cover item and receives at most two items. 
Moving one of the vertex cover items to agent $j$'s bundle strictly improves Nash welfare. This follows from the fact that $c > 2b \ge v'_j(X_j)$.

\item If an agent $i$ with a vertex cover item $c_r$
has another item (say $o$), then there must be an agent $j$ without a vertex cover item that has at most two items. Transferring $o$ to agent $j$ weakly improves the Nash welfare. This is because 
\begin{align*}
    \frac{v'_i(X_i) - v'_i(o)}{v'_i(X_i)} \ge \frac{c}{b+c} \ge \frac{2b}{a' + 2b} \ge \frac{v'_j(X_j)}{v'_j(X_j) + v'_j(o)}
\end{align*}
The second inequality follows by plugging in $c \ge \frac{2b^2}{a'}$.

\item If an agent has four or more items, we can transfer the least valued item (out of the four or more) to an agent who receives at most two items and no vertex cover items. Crucially, this uses the fact that $a' \ge \frac{2b}{3}$.
\end{enumerate}
Note that the transfers to show properties 2 and 3 only weakly improve Nash welfare, but these transfers are made without violating the other properties; additionally, these transfers need to be made only a finite number of times to ensure the property holds. So there must exist an MNW allocation where all three properties are satisfied.
\end{proof}

The three properties stated in Lemma~\ref{lem:vc-properties}
offer us some structure. Each agent in the max Nash welfare allocation $X$ either has a vertex cover item or exactly three other items. 

Next, we lower bound the number of agents who receive three items but do not receive a utility of $3b$. 
Consider the subset of nodes consisting of the agents who receive a vertex cover item. 
This subset of nodes, by assumption, must \emph{not} cover at least $\gamma|E|$ edges. 
Each of the uncovered edges represents an edge item that both endpoints value at $b$, but can only be given to one of them. 
Thus, one of the uncovered edge's endpoints must receive a utility of less than $3b$. 
If we do this for all uncovered edges, we get that there are at least $\frac{\gamma|E|}{3}$ agents whose utility is less than $3b$. 
We divide by three since $G$ is $3$-regular, so each node is counted at most thrice.

These $\frac{\gamma|E|}{3}$ agents receive a utility of at most $2b + a'$. All other agents receive a utility of either $3b$ or $c$. Note that receiving a utility of more than $3b$ is impossible with only three items, as the only items valued at $c$ are the vertex cover items.
This upper bounds the Nash welfare of the allocation $X$ under $v'$. 
Assuming $m'$ agents do not receive a vertex cover item or a utility of $3b$, $\NSW(v', X)$ is upper bounded by
\begin{align*}
    &\left (c^k (3b)^{2k - 0.5|V| - m'} (2b + a')^{m'} \right )^{\frac{1}{3k - 0.5|V|}} \\ &\quad \le
    \left (c^k (3b)^{2k - 0.5|V|} \left (\frac{2b + a'}{3b}\right )^{\frac{\gamma |E|}{3}} \right )^{\frac{1}{3k - 0.5|V|}} &\quad (\NSW^{-})
\end{align*}
The inequality holds since $m' \ge \frac{\gamma |E|}{3}$.
We have shown that it is \np-hard to decide whether an allocation has Nash welfare at least $\NSW^+$ or whether all allocations have a Nash welfare of at most $\NSW^{-}$. 
Taking the ratio of the two gives us the following approximation lower bound
\begin{align*}
    \frac{\NSW^+}{\NSW^{-}} = \left (\left (\frac{3b}{2b + a'}\right )^{\frac{\gamma |E|}{3}} \right )^{\frac{1}{3k - 0.5|V|}} \ge \left (\frac{3b}{2b + a'}\right )^{\frac{\gamma}{5}} 
\end{align*}
The final inequality follows since $k \le |V|$ and $|E| = \nicefrac{3|V|}{2}$. 
Since $a' < b$, this is a positive constant, and we are done.
\end{proof}

\citet{lee:j:apx-hardness} shows \apx-hardness of the MNW problem for general additive valuations using the same min vertex cover problem \citep{chlebik2006minvc}, but their reduced instance is more general and not restricted to three fixed values $a$, $b$, and $c$. 
It is also worth noting that our proof leads to a constant factor lower bound of $1.00013$ which improves on their constant factor lower bound of $1.00008$. 

\begin{restatable}{corr}{corrconstantlowerbound}\label{corr:constant-lower-bound}
Assume agents have $\{0, 1, 3\}$-valuations. It is impossible to approximate MNW by a factor smaller than $1.00013$ unless $\p = \np$.
\end{restatable}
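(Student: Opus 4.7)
The plan is to instantiate Theorem~\ref{thm:mnw-second-apx-hard} with the specific values $a = 0$, $b = 1$, $c = 3$, and then carefully track the constants through the reduction. Since $2b = 2 < 3 = c$, the case assumption of Theorem~\ref{thm:mnw-second-apx-hard} is satisfied, so APX-hardness is immediate; the content of the corollary is the explicit numerical lower bound of $1.00013$, which requires a more careful accounting than the generic bound $(3b/(2b+a'))^{\gamma/5}$ stated at the end of the proof.

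First I would plug the specific values into the expression for $a'$ used in the proof: $a' = \max\{a, 2b^2/c, 2b/3\} = \max\{0, 2/3, 2/3\} = 2/3$. The proof's generic lower bound $(3b/(2b+a'))^{\gamma/5}$ then becomes $(3/(2 + 2/3))^{\gamma/5} = (9/8)^{\gamma/5}$. However, the factor $1/5$ came from a coarse bound ($k \le |V|$ in the denominator $3k - 0.5|V|$, giving $3k - 0.5|V| \le 2.5|V|$, and $|E|/3 = |V|/2$, so $\gamma|E|/(3(3k-0.5|V|)) \ge \gamma|V|/2 \cdot 1/(2.5|V|) = \gamma/5$); to extract $1.00013$ I would likely rewrite this more tightly for the $\{0,1,3\}$ case and, more importantly, unpack the specific value of $\gamma$ coming from Lemma~\ref{lem:max-3-vc-hardness}.

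The key step, and the main obstacle, is obtaining a usable numerical value for $\gamma$. I would trace back through the cited results: Lemma~\ref{lem:max-3-vc-hardness} is obtained by combining the min vertex cover inapproximability bound of \citet{chlebik2006minvc} (for $3$-regular graphs) with the gap-preserving argument of \citet{pet:j:max-vertex} that converts a hardness-of-approximation for min vertex cover into a hardness-of-approximation for max vertex coverage, at the cost of an explicit multiplicative loss. After extracting the numerical inapproximability constant from \citet{chlebik2006minvc} and pushing it through Petrank's transformation, I would plug the resulting $\gamma$ into $(9/8)^{\gamma/5}$ (or into the sharper expression obtained by not discarding the dependence of $k$ on $|V|$). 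Verifying numerically that this evaluates to at least $1.00013$ is the final step; this should be a direct computation once $\gamma$ is pinned down.

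Finally, to get the improvement over the $1.00008$ bound of \citet{lee:j:apx-hardness}, I would compare the two reductions: \citet{lee:j:apx-hardness} use the same vertex-cover starting point but with a more general valuation structure, whereas our reduction uses the restricted $\{0,1,3\}$ valuations with the tight choice $a' = 2/3$ (which gives the base $9/8 = 1.125$ rather than a smaller base), so the same $\gamma$ yields a larger lower bound. I would present the numerical check at the end to confirm $(9/8)^{\gamma/5} \ge 1.00013$.
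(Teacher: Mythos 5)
Your setup is sound and you correctly compute $a' = \max\{0,\, 2/3,\, 2/3\} = 2/3$, giving the base $3b/(2b+a') = 9/8$, and you are right that $\gamma$ must be extracted explicitly from Lemma~\ref{lem:max-3-vc-hardness}. But there is a genuine gap: you plan to plug into the generic bound $(9/8)^{\gamma/5}$, and with the paper's $\gamma = 1/297$ (which is stated at the end of the proof of Lemma~\ref{lem:max-3-vc-hardness} in the appendix, via $\tfrac{1}{99}k^* \ge \tfrac{1}{198}|V| = \tfrac{1}{297}|E|$) this evaluates to $(9/8)^{1/1485} \approx 1.00008$, which does \emph{not} reach $1.00013$. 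The number $1.00013$ is unreachable without sharpening the exponent, and you only gesture at ``not discarding the dependence of $k$ on $|V|$'' without identifying how.

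The missing idea is the one the paper's proof actually rests on: in a $3$-regular graph the minimum vertex cover has size at most $2|V|/3$. This follows from Brooks' theorem, which gives an independent set of size at least $|V|/3$ (excluding $K_4$), whose complement is a vertex cover of size at most $2|V|/3$. Substituting $k \le 2|V|/3$ into the denominator $3k - 0.5|V|$ gives $3k - 0.5|V| \le 1.5|V|$, and since $|E| = 1.5|V|$ the exponent $\gamma|E|/(3(3k - 0.5|V|))$ improves from $\gamma/5$ to $\gamma/3$. Only then does one obtain $(9/8)^{\gamma/3} = (9/8)^{1/891} \approx 1.000132$, which rounds to $1.00013$. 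Your closing remark that the improvement over the $1.00008$ bound comes from ``the tight choice $a' = 2/3$'' is therefore misdirected: both exponents use the same $a'$; the actual gain comes from the Brooks'-theorem bound on $k$, and without it your computation reproduces $1.00008$ rather than improving on it.
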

% \begin{proof}
% This proof comes from the observation that the min vertex cover of any $3$-regular graph with $n$ nodes has size upper bounded at $2n/3$. 

% This property follows from the fact that every $3$-regular graph (other than $K_4$) has an independent set of size at least $n/3$ (Brooks' Theorem) \citep{Brooks1941}. The complement of this independent set must be a vertex cover of size at most $2n/3$. 

% We can use this property to upper bound $k$ in the proof of Theorem \ref{thm:mnw-second-apx-hard} with $2n/3$. This gives us the lower bound of 
% \begin{align*}
%     \frac{\NSW^+}{\NSW^{-}} = \left (\left (\frac{3b}{2b + a'}\right )^{\frac{\gamma |E|}{3}} \right )^{\frac{1}{3k - 0.5|V|}} \ge \left (\frac{3b}{2b + a'}\right )^{\frac{\gamma}{3}} 
% \end{align*}

% Finally, applying $a' = 2/3$ (from our choices of $a$, $b$, and $c$) and plugging in $\gamma = \frac{1}{297}$ (from the proof of Lemma \ref{lem:max-3-vc-hardness}) gives us our constant.
% \end{proof}

The vertex cover based proof technique also shows that computing max Nash welfare allocations is \apx-hard even when agents have $\{3, c\}$ valuations where $c > 3$ and is not divisible by $3$; this resolves an open question posed by \citet{akrami2022mnw}.

\begin{restatable}{prop}{propthreeccase}\label{prop:3-c-case}
When agents have $\{3, c\}$-valuations with $c > 3$ and $c$ not divisible by $3$, computing an MNW allocation is \apx-hard.
\end{restatable}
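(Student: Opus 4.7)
The plan is to adapt the vertex cover reduction of \Cref{thm:mnw-second-apx-hard} to the bivalued $\{3, c\}$ setting. Starting from a 3-regular graph $G = (V, E)$ and integer $k$ given by \Cref{lem:max-3-vc-hardness}, I would build an allocation instance with $|V|$ node agents, $|E|$ edge items, and a carefully chosen number of dummy items. Each node agent values its incident edges at $c$ and all other items (non-incident edges and dummies) at $3$. The number of dummies is chosen so that, when $G$ admits a vertex cover of size $k$, each agent can attain utility exactly $3c$: non-VC agents receive their three incident edges, and each VC agent is compensated with exactly $c$ dummies.

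For the other direction I would combine two ingredients. First, as in \Cref{thm:mnw-second-apx-hard}, the set of agents receiving all three of their incident edges in any MNW allocation must form an independent set in $G$; combined with \Cref{lem:max-3-vc-hardness} this set has size at most $|V| - k - \gamma|V|/2$, so at least $\gamma|V|/2$ additional agents end up with at most two incident edges. Second, because $c$ is not divisible by $3$, we have $\gcd(3, c) = 1$, so a bundle utility of the form $3p + cq$ with non-negative integers $p, q$ is uniquely determined by the pair $(p, q)$. This rigidity means an agent receiving exactly one or two incident edges (edge utility $c$ or $2c$) cannot be balanced to $3c$ via dummies alone, since the required number of dummies would be $2c/3$ or $c/3$, respectively, both non-integral. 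Every MNW allocation must therefore either leave some agents below $3c$ or push others above, and careful accounting over the $\Omega(\gamma|V|)$ shortfall agents converts this into a constant-factor gap in Nash welfare.

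The main obstacle will be the lower-bound analysis: the MNW allocation has considerable flexibility in distributing dummies and in sending edges to endpoints versus non-endpoints (at value $3$), and I must rule out every such choice closing the gap. I expect this to require a property-based argument in the spirit of \Cref{thm:mnw-second-apx-hard}, pinning down the structure of any optimal allocation (e.g., that almost all edges go to endpoints, that bundle sizes are nearly uniform, and that the number of $c$-valued items per agent is tightly controlled). Once this structure is fixed, the arithmetic obstruction from $\gcd(3, c) = 1$ provides the quantitative separation, and translating the resulting additive shortfall per agent into a multiplicative Nash welfare gap completes the proof.
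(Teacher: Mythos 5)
Your plan correctly identifies the right approach---adapting the vertex-cover reduction behind \Cref{thm:mnw-second-apx-hard} and exploiting the arithmetic rigidity that $\gcd(3,c)=1$ forces the unique decomposition of any bundle utility of the form $3p + cq$---and this is exactly what the paper's proof does. The arithmetic obstruction you state (an agent holding $1$ or $2$ incident edges cannot be padded with value-$3$ items to reach exactly $3c$) is the crux of the paper's argument as well.

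However, there is a concrete gap in your construction. With only the $|V|$ node agents, the completeness direction breaks: after every non-VC node agent takes its three incident edges, $3k - 1.5|V| \ge 0$ edge items remain (those with both endpoints in the vertex cover), and they have nowhere to go without pushing some agent above $3c$. Your stated target, ``each agent can attain utility exactly $3c$,'' is therefore unattainable with $|V|$ agents. The paper fixes exactly this by introducing $3k - 1.5|V|$ \emph{dummy agents}, each of which values all edge items and two reserved \emph{special} items at $c$; in the YES case each dummy agent absorbs one leftover edge item plus its two special items for utility $3c$, keeping the allocation perfectly balanced. (The paper also uses $ck$ vertex-cover items valued at $3$ by everyone, corresponding to your ``$c$ dummies per VC agent.'') You would need to add these dummy agents and their special items to make both the completeness direction and the final Nash-welfare ratio go through cleanly; once you do, the rest of your sketch lines up with the paper's argument.
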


We also show that the techniques used in this section can also be used to show \apx-hardness for computing MEW allocations, when all items have non-negative value. 

\begin{restatable}{theorem}{thmgoodsmewhard}\label{thm:goods-mew-hard}
When agents have $\{a, b, c\}$-valuations with $0 \le a < b < c$, computing an MEW allocation is \apx-hard.
\end{restatable}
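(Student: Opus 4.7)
The plan is to prove APX-hardness via gap-producing reductions that parallel the two MNW constructions (Theorems \ref{thm:mnw-first-apx-hard} and \ref{thm:mnw-second-apx-hard}). The decisive structural observation is that, unlike Nash welfare, the egalitarian objective is governed by a single worst-off agent: it is enough to construct an instance whose YES allocation gives every agent utility at least some threshold $T$, while every NO allocation forces some agent below $T/\rho$ for a constant $\rho > 1$. I would split on whether $c > 2b$ or $c \leq 2b$, mimicking the split between Theorems \ref{thm:mnw-second-apx-hard} and \ref{thm:mnw-first-apx-hard}.

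For $c > 2b$, I would reuse the vertex-cover construction of \Cref{thm:mnw-second-apx-hard} essentially verbatim. Starting from the $3$-regular graph $G$ and integer $k$ provided by \Cref{lem:max-3-vc-hardness}, the YES allocation --- derived from a size-$k$ vertex cover $S$ --- assigns each cover node its vertex-cover item (utility $c$), each non-cover node all three incident edge items (utility $3b$), and each dummy agent its two private special items plus one residual internal edge (utility $3b$). Thus MEW is at least $\min(c, 3b)$. In the NO case, any placement of the $k$ vertex-cover items leaves at least $\gamma |E|$ edges contested between two agents who both value them at $b$, so some such agent must surrender an incident edge and retains at most $2b$ in $b$-valued items; a tightness argument upgrades this to an upper bound of $2b + O(a)$ on the overall MEW, giving the required constant-factor gap.

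For $c \leq 2b$, I would adapt the 2P2N-3SAT construction of \Cref{thm:mnw-first-apx-hard} with target $T = 2b$. Variable agents take either the clog item (padded by a single extra item when $c < 2b$) or both copies of their literal; clause agents take their dedicated special item plus one satisfying literal copy; and dummy agents are padded analogously so that they reach exactly $2b$. If the formula admits a fully satisfying truth assignment, every agent attains utility at least $2b$; otherwise some clause agent is left with only its special item (value $b$) plus at most $O(a)$-value padding. Using a PCP-style gap version of 3SAT whose YES side is \emph{all clauses satisfied} (rather than the $(1-\epsilon)$-vs-$\tfrac{1015}{1016}+\epsilon$ Berman--Karpinski gap used in \Cref{thm:mnw-first-apx-hard}) then yields a gap of at least $2b / (b + O(a)) > 1$.

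The main obstacle in both cases is the NO-side upper bound when $a \geq 1$: \emph{a priori}, a deficit agent could be padded back up to $T$ with enough $a$-valued items, closing the gap. The key design idea is to make the YES allocation \emph{tight} --- every item is consumed at its maximum value and every bundle contains exactly the items it needs to reach $T$ --- so that any transfer of an item to a deficit agent (who values it at $a$) must strip it from an agent who valued it at $b$, dropping that donor's utility by $b - a \geq 1$ and pushing some agent below the threshold. The local-transfer style arguments used throughout \Cref{thm:mnw-first-apx-hard} and \Cref{thm:mnw-second-apx-hard} (analogous to Properties \ref{prop:typei-exactly-two} and \ref{prop:other-typeii-dummy} and \Cref{lem:vc-properties}) can be repurposed to carry out this bookkeeping for MEW, pinning the NO-side optimum to within a constant factor of the YES-side optimum.
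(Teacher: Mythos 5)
Your high-level observation is exactly right --- for MEW only one deficit agent needs to fail, so a gap reduction where the YES side pins every agent at a threshold and the NO side forces one agent strictly below suffices. But the execution has two real problems.

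\textbf{The vertex-cover branch for $c > 2b$ does not close.} Consider $2b < c \leq 3b$ with $a \geq c - 2b$ (e.g.\ $a = 2, b = 3, c = 7$, so $a > c - 2b = 1$). The YES-side MEW is $\min(c, 3b) = c$. In the NO side a node agent who loses a contested edge still has two incident edges it values at $b$, and after cover agents take one item each the remaining item-to-agent ratio is exactly $3{:}1$, so that agent can absorb a third item it values at $a$ and reach $2b + a \geq c$. No single worst-off agent is forced below the threshold, and your ``tightness argument'' bound of $2b + O(a)$ is in fact consistent with $\geq c$ here. You cannot borrow the $a' = \max\{a, 2b^2/c, 2b/3\}$ relaxation used in \Cref{lem:vc-properties} and \Cref{thm:mnw-second-apx-hard}: increasing a valuation only gives a valid \emph{upper} bound for NSW products, but it is not sound for MEW. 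The paper does not split on $c \gtrless 2b$ via two different reductions at all; it runs the 2P2N-3SAT construction of \Cref{thm:mnw-first-apx-hard} in all cases, with three subcases on $(c \gtrless 2b, a = 0$ versus $a > 0)$ that only adjust padding.

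\textbf{The $a > 0$ handling is the crux and is left vague.} You note the risk and propose ``padded by a single extra item when $c < 2b$,'' but the precise fix is what actually makes the proof go through: add exactly $n$ padding items valued at $a$ by everyone, yielding $8n$ items over $4n$ agents. Then reaching the YES threshold $\min\{c+a, 2b\}$ forces every agent to take exactly two items. That rigid $2{:}1$ counting, combined with the clog items going to sink agents and consistency across $\Pos_i/\Neg_i$, forces any unsatisfied clause agent down to $b + a < \min\{c + a, 2b\}$, giving the gap. The vertex-cover instance does not admit such an exactly-$2{:}1$ structure, which is precisely why it breaks in the regime above.

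\textbf{The PCP-gap version of 3SAT is unnecessary.} Because MEW is pinned by a single worst-off agent, a single unsatisfied clause already drops the objective; plain NP-hardness of satisfiability for 2P2N-3SAT is enough, and the YES/NO MEW gap of $2b$ versus $b+a$ (or $\min\{c+a,2b\}$ versus $\min\{c+a, 2b\}-1$) directly yields a constant-factor inapproximability without invoking any gap amplification on the SAT side.
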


Note that MEW allocations can be defined even when agents have negative values for the items; the above proof can be used to show \np-hardness for computing an MEW allocation even when $a$ is negative.

\section{Mixed Manna}\label{sec:chores}\label{sec:mixed-manna}
Next, we consider mixed manna, i.e. the case where agents have $\{a, b, c\}$-valuations with $a < b \le 0 < c$. Note that when two of $a, b$ and $c$ are positive, the problem of computing MEW allocations is NP-hard (Theorem \ref{thm:goods-mew-hard}). For the cases when $a$ and $b$ are negative, we have the following hardness results.

\begin{restatable}{theorem}{thmmixedmannamewhard}\label{thm:mixedmanna-mew-hard}
When agents have $\{a, c\}$-valuations with $a < 0 < c$ and $|a| > |c|$, computing an MEW allocation is \np-hard.  
\end{restatable}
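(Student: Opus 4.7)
My plan is to reduce from a strongly \np-hard partitioning problem, most naturally 3-PARTITION, using the condition $|a| > |c|$ to force the "weight balancing" structure needed to distinguish yes- and no-instances. The high-level intuition is that when $|a| > |c|$, each chore (value $a$) hurts more than a good item (value $c$) helps, so there is a strictly convex-like penalty for receiving too many chores; this pressure pushes any MEW allocation to equalize chore loads across agents, which is precisely what a 3-PARTITION asks for.

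Concretely, given a 3-PARTITION instance with $3m$ integers $z_1,\dots,z_{3m}$ satisfying $\sum_i z_i = mB$ and $B/4 < z_i < B/2$, I would build an MEW instance with $m$ "bin" agents. For each $z_i$, I would introduce $z_i$ chore items (valued at $a$ by every bin agent), together with a carefully chosen number of compensator items valued at $c$ by every bin agent, so that the only way to reach a target threshold $T$ on every agent is to place exactly one triple's worth of chores (total chore weight $B$) plus the same fixed number of compensators in each agent's bundle. Additional "anchor" agents/items can be added so that any chore not placed in a bin agent's bundle (or any extra compensator placed in one) is either forbidden or wasteful, ensuring all chores are divided among the $m$ bin agents and all compensators are used. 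The threshold $T$ is chosen so that a perfect 3-partition achieves MEW $= T$.

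For correctness I would argue both directions. The forward direction is a direct construction: given a 3-partition, distribute each triple's chore items to the corresponding agent together with the fixed block of compensators, giving every bin agent utility exactly $T$. The converse uses $|a|>|c|$: if some bin agent receives chores of total weight $B' > B$, their utility is at most $T - (B'-B)(|a|-|c|) < T$ (after optimally compensating with $c$-items borrowed from another agent, who in turn falls below $T$), so the minimum utility strictly drops. The fact that $|a|-|c|>0$ is exactly what turns any deviation from a balanced chore distribution into a strict loss for the MEW objective, giving an if-and-only-if between the existence of a 3-partition and MEW achieving $T$.

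The main obstacle is designing the compensator gadget and the anchor agents so that (i) all $mB$ chore items must actually go to the $m$ bin agents (and not be dumped onto anchors), (ii) the compensators are split evenly across bin agents at optimum, and (iii) the reduction remains polynomial — which works because the strong \np-hardness of 3-PARTITION lets us assume $B$ is polynomial in $m$, so creating $\sum_i z_i = mB$ chore items is polynomial in the input size. Getting (i)--(iii) to hold simultaneously with only two valuation levels is the delicate part: because we cannot encode item "sizes" directly via the two values, we must use multiplicities of unit chores and lean on the strict inequality $|a|>|c|$ to rule out "swap one chore for some $c$-items" deviations.
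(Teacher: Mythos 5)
Your reduction has a fatal gap: the unit chores derived from different $z_i$'s are indistinguishable to the bin agents, since every bin agent values every chore item at the same level $a$. Consequently, \emph{any} way of splitting the $mB$ unit chores into $m$ groups of $B$ chores each---together with an even split of the compensator items---gives every bin agent utility exactly $T$. This works for every 3-PARTITION instance with $\sum_i z_i = mB$, whether or not that instance is a yes-instance. Your construction therefore cannot separate yes- from no-instances of 3-PARTITION; the converse direction of your argument cannot get off the ground because nothing in the instance forces the chores to be grouped according to the original $z_i$'s.

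This is not a detail that extra anchor gadgets can plausibly fix. With only the two levels $\{a,c\}$ available and all bin agents seeing chore items identically, there is no mechanism to encode ``these $z_i$ chores must travel together.'' This is exactly why partitioning-type reductions (3-PARTITION, PARTITION, BIN PACKING) are an awkward fit for two-valued fair-allocation instances: the two values cannot encode item sizes, and replacing sizes by multiplicities of interchangeable unit items loses the grouping constraint. The paper instead reduces from the decision version of 2P2N-3SAT, encoding a combinatorial incidence structure (which literals satisfy which clauses) via the \emph{supports} of the agents' valuations rather than via item multiplicities. It first scales so that $|a|$ and $|c|$ are coprime, observes that the constructed instance has maximum utilitarian welfare exactly $0$, and uses coprimality to pin down the exact bundle composition (for some integer $\alpha \geq 0$, exactly $\alpha c$ items of value $a$ plus $\alpha |a|$ items of value $c$) in any allocation achieving egalitarian welfare $0$. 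That rigidity, rather than a soft ``penalty grows with $|a|-|c|$'' argument, is what forces consistency between the allocation and a truth assignment for the SAT formula.
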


The above result follows from reductions using the decision version of the 2P2N-3SAT problem. We also have the following hardness result from \citet{cousins2023mixedmanna}.

\begin{theorem}[\cite{cousins2023mixedmanna}]\label{thm:cousins2023mixedmanna}
When agents have $\{a, c\}$-valuations with 
\begin{inparaenum}[(i)]
    \item $a < 0 < c$, 
    \item $|a| \ge 3$, and
    \item $|a|$ and $|c|$ are coprime, 
\end{inparaenum}
computing an MEW allocation is \np-hard.  
\end{theorem}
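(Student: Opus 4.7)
The plan is to reduce from the decision version of 2P2N-3SAT to MEW computation under $\{a,c\}$-valuations. Given a formula $\phi$ with $n$ variables and $m = 4n/3$ clauses, I would build an allocation instance whose maximum egalitarian welfare meets a specific target $T$ (some multiple of $c$) if and only if $\phi$ is satisfiable.

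I would follow the template of the construction in \Cref{thm:mnw-first-apx-hard}, adapted to the mixed manna setting and the egalitarian objective. For each variable $x_i$ I introduce four literal items $x_i, x'_i, \overline x_i, \overline x'_i$ and a clog item $\clog_i$, along with two polarity agents $\Pos_i$ and $\Neg_i$: $\Pos_i$ values $x_i, x'_i$ and $\clog_i$ at $c$ and everything else at $a$, and $\Neg_i$ is defined symmetrically with the negative literal items. For each clause $C_j$, a clause agent $A_j$ values the literal items of the literals in $C_j$ at $c$, together with a private item $r_j$ that only $A_j$ values at $c$, and values everything else at $a$. Padding items and padding agents are included so that the item and agent counts tightly fix the bundle sizes in a target-achieving allocation.

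The crucial use of $|a| > |c|$ comes through a rigidity lemma: in any allocation with egalitarian welfare at least $T$, no agent can receive any item they value at $a$, since the utility drop of $|a|$ from a single chore cannot be made up by the at-most-a-handful of $c$-valued goods within the agent's reach. Combined with the tight item-to-agent ratio, this forces each clog item to be allocated to exactly one of $\{\Pos_i,\Neg_i\}$ (encoding a truth value $\sigma(x_i)$), each literal item to an agent that values it at $c$, and each private clause item $r_j$ to its clause agent $A_j$. The remaining combinatorial routing of literal items to clause agents is then feasible with every clause agent achieving utility $\ge T$ if and only if $\phi$ is satisfiable: in the forward direction, a satisfying $\sigma$ directly tells us how to route clogs and literals; in the backward direction, extracting $\sigma$ from the forced placement of clogs yields a truth assignment whose truth-valued literals must cover every clause.

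The main obstacle will be calibrating the padding and the target $T$ so that the rigidity lemma yields exactly the right structure: in particular, ruling out ``inconsistent'' allocations in which a variable agent is assigned one positive and one negative literal item of the same variable, and ensuring that clause agents cannot achieve the target by ``hoarding'' literal items from other variables that do not correspond to a true literal. A careful counting argument, driven by the fact that a single chore costs strictly more than any $c$-good repays, together with upper bounds on the number of $c$-valued items in each agent's reach, should close these cases and establish both directions of the reduction.
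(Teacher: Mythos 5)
Your proposal does not actually prove the stated theorem, and it does not match the paper: Theorem~\ref{thm:cousins2023mixedmanna} is cited from \cite{cousins2023mixedmanna} and is not proved anywhere in this paper, so there is no internal proof to compare against. More importantly, the core of your argument rests on a hypothesis that the theorem does not assume. Your ``rigidity lemma'' is driven by $|a| > |c|$ --- the claim that a single chore's loss of $|a|$ cannot be recouped by a bounded number of $c$-valued goods --- but this theorem only assumes $a < 0 < c$, $|a| \ge 3$, and $\gcd(|a|,|c|)=1$. Those hypotheses allow, e.g., $a=-3$ and $c=100$, where one good vastly outweighs one chore; there your rigidity lemma, and the entire forcing argument built on it, fails. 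You appear to have confused this statement with Theorem~\ref{thm:mixedmanna-mew-hard}, whose premise is precisely $|a| > |c|$ and which the paper does prove in the appendix.

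What a correct proof under these hypotheses must exploit is the coprimality of $|a|$ and $c$, not the magnitude of $a$. The appendix proof of Theorem~\ref{thm:mixedmanna-mew-hard} (which begins by reducing WLOG to coprime $|a|,|c|$) shows the shape of the needed argument: an agent that must end at utility exactly $0$ while receiving only $\{a,c\}$-valued items is constrained to hold $\alpha c$ chores and $\alpha|a|$ goods for some integer $\alpha\ge 0$, since $pa+qc=0$ together with $\gcd(|a|,c)=1$ forces $c\mid p$ and $|a|\mid q$. That number-theoretic rigidity --- paired with gadgets of $|a|-1$ private special items for clause/dummy agents and $|a|-2$ shared special items for each $(\Pos_i,\Neg_i)$ pair, where the requirement $|a|-2\ge 1$ is exactly why $|a|\ge 3$ is assumed --- is what pins down bundle structures and encodes the 2P2N-3SAT instance. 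Your magnitude-based counting does not substitute for it and leaves the reduction unsound for a large portion of the theorem's parameter range.
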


We can use these results to show the following, again using the 2P2N-3SAT problem:
\begin{restatable}{theorem}{thmtwonegative}\label{thm:two-negative}
When agents have $\{a, b, c\}$-valuations with $a < b < 0 < c$, computing an MEW allocation is \np-hard.  
\end{restatable}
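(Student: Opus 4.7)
My plan is to split into two regimes based on the existing bivalued hardness results. First, observe that any $\{a, c\}$-valuation is trivially an $\{a, b, c\}$-valuation (simply no item is valued at $b$), so Theorem~\ref{thm:mixedmanna-mew-hard} immediately gives hardness whenever $|a| > |c|$, and Theorem~\ref{thm:cousins2023mixedmanna} gives hardness whenever $|a| \ge 3$ and $\gcd(|a|, |c|) = 1$. The cases not covered by these are those with $|a| \le |c|$ and either $|a| = 2$ (which forces $b = -1$) or $\gcd(|a|, |c|) \ge 2$.

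For these remaining cases I would reduce from the decision version of 2P2N-3SAT, adapting the template of Theorem~\ref{thm:mnw-first-apx-hard} to the MEW objective in the mixed manna setting. Given $\phi(x_1, \dots, x_n)$ with $m = 4n/3$ clauses, the construction would include literal items $x_i, x'_i, \bar x_i, \bar x'_i$ and a clog item $\clog_i$ for each variable, variable agents $\Pos_i$ and $\Neg_i$ that value their positive (resp. negative) literals and $\clog_i$ at $c$, one clause agent per clause valuing its literal items at $c$, and a collection of dummy agents, each calibrated to absorb a fixed combination of $a$- and $b$-valued items so as to reach utility exactly equal to a carefully chosen target $T$. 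The threshold $T$ is chosen so that any allocation with MEW at least $T$ must satisfy: (i) each variable agent receives either $\clog_i$ alone or both literal items of the opposite polarity, inducing a truth assignment $\sigma$; (ii) each clause agent receives at least one literal item satisfied by $\sigma$; and (iii) the remaining negative items are absorbed by the dummy agents without pushing any of them below $T$.

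The forward direction then follows by direct verification from a satisfying assignment. The reverse direction follows from a sequence of structural lemmas analogous to Properties~\ref{prop:all-positive} through~\ref{prop:clog-literal-encoding}, each established by a local transfer argument that strictly raises the minimum agent utility whenever the property is violated. The main obstacle is calibrating the dummy-agent structure so that the arithmetic of $T$ works out uniformly across the remaining parameter triples: the middle value $b$ is essential here, because when $|a|$ and $|c|$ share a common factor, bundles built only from $a$- and $c$-valued items cannot hit $T$ exactly, and mixing in a tuned number of $b$-valued items provides the residual flexibility needed. A secondary technical point is that in mixed manna every item must be allocated, so the total counts of $a$-, $b$-, and $c$-valued items across all agent types must match exactly, which places mild divisibility constraints on the number of dummy agents of each type.
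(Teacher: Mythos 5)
Your initial case-reduction step leaves out the most important narrowing: you never invoke the pure-chores hardness result. Just as $\{a,c\}$-valuations are trivially $\{a,b,c\}$-valuations, so are $\{a,b\}$-valuations (no item valued at $c$), which is a pure chores instance; by \citet{lenstra1990chores} this is \np-hard whenever $a \ne 2b$. The paper uses this first, so the only case remaining after the bivalued results is $a = 2b$. Combined with $c \ge -a$ (else Theorem~\ref{thm:mixedmanna-mew-hard}) and a scaling argument applied to Theorem~\ref{thm:cousins2023mixedmanna} (you also omit the scaling step --- when $d=\gcd(|a|,|c|)\ge 2$, divide through by $d$ and apply the coprime theorem to $\{a/d,c/d\}$ whenever $|a|/d \ge 3$), the single case that actually needs a new reduction is $a = 2b$ with $c = -k^* b$ for some integer $k^* \ge 2$.

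Because you skip these two reductions, the class of ``remaining'' triples you must cover with a bespoke construction is far broader than the paper's, including every triple with $a \ne 2b$ and $|a| \le |c|$, and every triple with a nontrivial $\gcd$. This is precisely where your sketch stalls: you flag the ``arithmetic of $T$'' as the main obstacle and say the dummy agents must be ``calibrated'' to hit the target exactly, but you give no concrete scheme, and a single template construction almost certainly cannot absorb that whole range. The paper sidesteps the difficulty entirely: once $a = 2b$ and $c = -k^*b$, each sink, clause, and dummy agent is handed one padding item valued at $c$ plus exactly $k^*$ items valued at $b$ for a total of $c + k^* b = 0$, so there is no calibration to do. Your 2P2N-3SAT template idea is in the right family, but without first collapsing to $a = 2b$ via the chores result (and scaling), the argument as proposed does not close.
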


Combining \Cref{thm:mixedmanna-mew-hard,thm:cousins2023mixedmanna,thm:two-negative} shows that the only case where one could hope to compute an MEW allocation is when agents have $\{-2, 0, c\}$-valuations. 
We could not show a hardness result for this case and conjecture that it may admit efficient algorithms.
\begin{conjecture}
    There exists an efficient algorithm for computing MEW allocations when agents have $\{-2,0,c\}$-valuations.
\end{conjecture}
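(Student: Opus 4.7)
The plan is to extend the algorithmic framework of \citet{cousins2023mixedmanna} from the $\{-1, 0, c\}$ case to $\{-2, 0, c\}$, combining it with ideas from the polynomial-time algorithm for bivalued chores with ratio $a/b = 2$. First I would preprocess: any item that is valued at $c$ by no agent can be given to an agent who values it at $0$, since this never decreases anyone's utility. Items valued at $-2$ by all agents must still be distributed, but they contribute a uniform baseline burden that can be accounted for. After this step, every item remaining to be allocated is valued at $c$ by at least one agent.

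Next, binary-search over the optimal MEW target $\tau$. Because every agent's utility is of the form $c g_i - 2 b_i$ with $0 \le g_i, b_i \le m$, there are only $O(m^2)$ candidate values of $\tau$ to test. For a fixed $\tau$, formulate the feasibility problem as the natural LP with constraints $\sum_j v_i(j) x_{ij} \ge \tau$ for each agent $i$, $\sum_i x_{ij} = 1$ for each item $j$, and $x_{ij} \in [0,1]$. The crux of the argument is to show that a feasible fractional solution can be rounded to an integral allocation without losing the target $\tau$.

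The main obstacle is the rounding step. A basic feasible LP solution has at most $n$ fractionally-assigned items, which induces a sparse bipartite graph between agents and these items; I would try to exploit this sparsity together with a parity argument, since each chore contributes exactly $-2$ and shifting a chore between two agents changes both their utilities by an even amount. This is reminiscent of the structure used in the $\{-2,-1\}$ chores algorithm. A secondary subtlety is handling items viewed as goods by some agents and chores by others: the rounding must consistently decide each such item's role. If a direct LP rounding fails to preserve $\tau$ exactly, a fallback is to enumerate candidate profiles $(g_i, b_i)$ via polyhedral or matroid-intersection arguments and reduce the remaining assignment step to a bipartite $b$-matching. I expect the parity of $c$ to be the most delicate technical issue, with even $c$ likely admitting a cleaner integrality argument than odd $c$, since the mod-$2$ structure of chore counts must be reconciled with the utility levels induced by the $c$-valued items.
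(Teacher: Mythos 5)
This statement is a \emph{conjecture} in the paper, not a theorem: the authors explicitly state they could not show hardness for $\{-2,0,c\}$-valuations and leave the existence of an efficient algorithm open. There is therefore no proof in the paper to compare against, and what you have written is a research plan rather than a proof, so it cannot be said to establish the claim.

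As a research plan, you have correctly identified the two most promising ingredients (the order-neutral submodular framework of Cousins et al.\ for $\{-1,0,c\}$, and the polynomial-time scheduling algorithm for ratio-$2$ bivalued chores) and the central obstacle (rounding a fractional assignment to an integral one without losing the egalitarian target $\tau$). But every step where the argument actually needs to close is left open. The LP you write down does not obviously have an integral optimum, and the ``sparsity plus parity'' rounding is a hope, not an argument: a basic feasible solution having few fractional items does not by itself let you redistribute them, because an item that is a good (value $c$) for one agent and a chore (value $-2$) for another changes both agents' utilities asymmetrically, and the $-2$ items break the unit-increment structure that makes the $\{-1,0,c\}$ path-augmentation argument work. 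Your own fallback (``enumerate candidate profiles via polyhedral or matroid-intersection arguments'') is precisely the part that would constitute the proof, and it is not carried out. You also flag the parity of $c$ as ``the most delicate technical issue,'' which is an honest acknowledgment that the argument is incomplete there. In short, the proposal is a plausible and well-motivated direction consistent with how such results have been obtained in adjacent settings, but it contains no complete lemma or rounding procedure, and the conjecture remains open after reading it.
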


For this case of $\{-1, 0, c\}$-valuations (with $c > 1$), an efficient algorithm to compute MEW allocations is known \citep{cousins2023mixedmanna}. The algorithmic results of \citet{cousins2023mixedmanna} apply to a broader class of valuations they define as {\em order-neutral submodular valuations}. This is a more general class than additive valuations but a strict subset of submodular valuations. 
This restriction raises the natural question of whether the restriction from submodular valuations to order-neutral submodular valuations is necessary. 
The answer to this question turns out to be quite surprising. 
However, before we present it, we must first formally define $A$-submodular valuations and order neutrality. 

\begin{definition}[$A$-submodular]
Given a set of integers $A$, A valuation function $v_i$ is $A$-submodular if:
\begin{inparaenum}[(a)]
    \item $v_i(\emptyset) = 0$,
    \item for any $o \in O$ and $S \subseteq O \setminus \{o\}$, $v_i(S \cup \{o\}) - v_i(S) \in A$, and
    \item for any $o \in O$ and $S \subseteq T \subseteq O \setminus \{o\}$, $v_i(S \cup \{o\}) - v_i(S) \ge v_i(T \cup \{o\}) - v_i(T)$.
\end{inparaenum}
In simple words, the valuation function is submodular, and the marginal gains are restricted to values in $A$.
\end{definition}

\begin{definition}[Order Neutrality]
A submodular function $v_i$ is order neutral if for all subsets $S \subseteq O$ and two permutations of the items in $S$, $\pi, \pi':[|S|]\rightarrow S$, the multi-set $\{v_i(\bigcup_{j \in [k]}\pi(j)) - v_i(\bigcup_{j \in [k-1]}\pi(j))\}_{k \in [|S|]}$ is identical to 
the multi-set $\{v_i(\bigcup_{j \in [k]}\pi'(j)) - v_i(\bigcup_{j \in [k-1]}\pi'(j))\}_{k \in [|S|]}$.

In simple words, the order in which the items are added to the set does not affect the marginal gains of the set of items.
\end{definition}

We now present our results. We first show that when agents have $\{-1, 0, 1\}$-submodular valuations, computing an MEW allocation is intractable. 

\begin{restatable}{theorem}{thmsubmodularhard}\label{thm:submodular-hard}
When agents have $\{-1, 0, 1\}$-submodular valuations, computing an MEW allocation is \np-hard.  
\end{restatable}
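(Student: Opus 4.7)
The plan is to reduce from a natural NP-hard combinatorial packing problem---the main candidate is 3-DIMENSIONAL MATCHING or its close relative EXACT COVER BY 3-SETS---and to design valuations that live in the $\{-1,0,1\}$-submodular class but are not captured by the order-neutral algorithm of \citet{cousins2023mixedmanna}. The intuition is that submodularity with marginals in $\{-1,0,1\}$ can express coverage-with-diminishing-returns (the first target item gives $+1$, further ones give $0$) together with a linear penalty for off-target items, a combination strictly richer than additive $\{-1,0,1\}$-valuations and one that exploits the symmetry $|\!+\!1|=|\!-\!1|$. This symmetry is exactly what breaks \citet{cousins2023mixedmanna}'s argument, since their technique crucially relies on $c > 1$.

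For the construction, items correspond to the hyperedges (or sets) of the base instance, augmented with a carefully tuned number of auxiliary ``slack'' items. Agents consist of one element agent $a_u$ per ground-set element $u$, plus a bank of dummy agents designed to absorb slack items without penalty. Each element agent's valuation is of the form $v_{a_u}(S)=\min(|S\cap T_u|,1)-|S\setminus T_u|$, where $T_u$ is the set of items ``incident'' to $u$. This function is submodular (truncated coverage plus modular penalty), and its marginals lie in $\{-1,0,1\}$: adding an item of $T_u$ contributes $0$ or $+1$ depending on whether $T_u$ has already been hit, and adding an off-target item contributes $-1$. Dummy valuations are chosen so that the dummies attain the target threshold precisely when they each receive the intended number of slack items.

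For correctness, the forward direction is direct: given a perfect matching or exact cover, route each solution hyperedge to one of its element-agents (each gets exactly one target item for value $+1$) and ship the remaining items to the dummies. For the reverse direction, any allocation attaining the MEW threshold must give each element agent a non-empty bundle contained entirely in its target (otherwise the modular penalty drops its utility below threshold), and a global counting argument then forces the chosen hyperedges to be pairwise disjoint and to cover every element, yielding a valid matching/cover in the base instance.

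The technical crux will be calibrating the auxiliary agents and items so that the MEW threshold cleanly separates YES from NO. With the coverage cap at $1$, the na\"ive allocation that gives every element agent any incident hyperedge already achieves value $+1$, so the construction must introduce structure---most likely via per-endpoint copies of each hyperedge, or a parallel ``slot'' gadget---that forces solution hyperedges to be pairwise disjoint rather than merely jointly covering. Getting this tightness to hold without the wiggle room offered by $c>1$ is what I expect to be the main obstacle, and it is where the proof will need its most delicate bookkeeping.
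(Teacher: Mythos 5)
You have the right source problem (3D matching) and the right intuition---that the symmetry $|{+1}| = |{-1}|$ is exactly what defeats order-neutrality and hence Cousins et al.'s algorithm---but the construction you sketch does not produce a separation, as you yourself flag at the end. With agents indexed by ground-set elements and items by hyperedges, the incidence structure of a $3$-regular $3$-uniform hypergraph on $3k$ elements is a $3$-regular bipartite graph on $3k + 3k$ vertices, which always has a perfect matching (K\H{o}nig/Hall). So every element agent can always be handed a distinct incident hyperedge, achieving value $+1$ in the NO case as well, and the reverse direction collapses. The forward direction is also not ``direct'': a solution hyperedge can be routed to only one of its three endpoint agents, so two thirds of the element agents receive no target item. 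The ``per-endpoint copies'' or ``slot gadget'' you mention is precisely the missing idea, and without working it out the proof has no content in either direction.

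The paper resolves this by \emph{dualizing} your construction: agents correspond to the $3k$ hyperedges, and items are two copies of each node ($6k$ items), $k$ ``matching'' items, and $2k$ ``padding'' items. The key gadget is that a matching item grants marginal $+1$ only to an empty bundle, and once an agent holds a matching item every node item's marginal flips from $0$ to $-1$. This interference makes the valuation genuinely \emph{not} order-neutral---the marginal multisets $\{-1,+1\}$ and $\{0,0\}$ have the same sum, which is exactly the $c=1$ degeneracy Proposition 5.7 identifies. By contrast your $v_{a_u}(S)=\min(|S\cap T_u|,1)-|S\setminus T_u|$ is truncated coverage plus a modular term and is therefore order-neutral, so despite your correct intuition it does not actually exploit the structural feature that makes $c=1$ hard. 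With the paper's gadget, an agent that secures a matching item can accept no node item at non-negative marginal, so the other $2k$ edge agents must absorb all $6k$ node copies at marginal $0$, i.e.\ exactly one copy of each of their three endpoints; a short counting argument then forces the $k$ matching-item holders to form a perfect matching.
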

\begin{proof}
We reduce from the NP-complete restricted exact 3 cover problem~\citep{gon:j:clustering-minimize}.

\prob{Restricted Exact 3 Cover (RX3C)}%
{A finite set of elements $U = \{1, 2, \dots, {3k}\}$, and a collection of 3-element subsets of $U$ (denoted by $\cal F$) such that each element in $U$ appears in exactly $3$ subsets in $\cal F$.}%
{Does there exist a set of triples $\cal F' \subset \cal F$ such that every element in $U$ occurs in exactly one subset in $\cal F'$?}

% The Restricted Exact 3 Cover pris \np-complete even when the graph $G$ is $3$-regular --- that is, each vertex has exactly three edges incident on it. \citet{gon:j:clustering-minimize} shows this result for the closely related problem Exact Cover by 3-Sets where each element appears in exactly three triples and it is easy to see that the analogous result for 3D matching can be shown using this approach (see~\cite{csa:t:popularity-one-sided-matching}).

Note that $|\cal F| = 3k$ since each of the $3k$ elements in $U$ appear in exactly $3$ sets in $\cal F$.
Given an instance of RX3C, we construct a fair allocation instance with $3k$ agents and $9k$ items. 

The $9k$ items are defined as follows: For each element $i \in U$, we have two items $i$ and $i'$ corresponding to the element. We also have $k$ {\em cover} items and $2k$ {\em padding} items. 

The $3k$ agents are defined as follows: For each subset $F = \{i, j, k\}$ in $\cal F$, we have an agent who has the %following
valuation function $v_F$. We describe this valuation function in terms of its marginal gains to make it clear that it is submodular. If the bundle does not contain a cover item, the first item corresponding to the elements $i$, $j$, and $k$ added to the bundle have a marginal value of $0$. The second item adds a marginal value of $-1$. So $v_F(\{i, j, k\}) = 0$ but $v_F(\{i, i', j, k\}) = -1$. This marginal gain of $0$ occurs only if the bundle does not contain a cover item; otherwise the marginal value is $-1$.

The cover items add a marginal value of $1$ when added to an empty bundle. Otherwise they add a marginal value of $0$. All padding items add a marginal value of $1$, irrespective of the bundle they are added to. All other marginal values are $-1$.

If the original RX3C instance admits an exact cover, we can construct an allocation with egalitarian welfare $1$. If the cover consists of the set of triples $\cal F' \subseteq \cal F$, we give all the agents in $\cal F'$ cover items, and all the agents outside $\cal F'$ a padding item along with a copy of each element the subset contains.

If the original RX3C instance does not admit an exact cover, then assume for contradiction that an allocation $X$ achieves an egalitarian welfare of at least $1$. Since there are only $3k$ items that provide a marginal value of $1$, each agent must receive exactly one of these items at the marginal value of $1$ such that no other item in their bundle provides a marginal value of $-1$. 

This implies the set of agents who receive a cover item must not receive any other item. Let this set of agents be $\cal F'$, and since there are $k$ cover items we know $|\cal F'| = k$.
Since $\cal F'$ is not an exact cover, there must be at least one element $i$ present in two subsets in $\cal F'$. This implies at least one of the items corresponding to the element $i$ must be allocated at a marginal value of $-1$. This is a contradiction, giving us our required separation.
\end{proof}

The proof of \Cref{thm:submodular-hard} (or at the very least, this proof technique) does not extend beyond $\{-1, 0, 1\}$-submodular valuations to $\{-1, 0, c\}$-submodular valuations. It turns out, rather surprisingly, that MEW allocations can be computed efficiently when agents have $\{-1, 0, c\}$-submodular valuations with $c \ge 2$. We show this by proving that all $\{-1, 0, c\}$-submodular valuations are order neutral, thereby showing they fall under the class of valuations for which \citet{cousins2023mixedmanna} present an efficient algorithm.

\begin{restatable}{prop}{propsubmodularorderneutral}\label{prop:submodular-orderneutral}
When $c \ge 2$, all $\{-1, 0, c\}$-submodular valuations are order neutral.   
\end{restatable}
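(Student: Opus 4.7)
The plan is to reduce order-neutrality to a single small case analysis via adjacent swaps. First I would observe that any two linear orders of a set $S$ are connected by a sequence of adjacent transpositions, so it suffices to prove order-neutrality for a single swap of two consecutive items. Concretely, for any $R \subseteq O$ and any distinct $p, q \in O \setminus R$, I need to establish the multiset equality
\[
\{v_i(R \cup \{p\}) - v_i(R),\ v_i(R \cup \{p, q\}) - v_i(R \cup \{p\})\} = \{v_i(R \cup \{q\}) - v_i(R),\ v_i(R \cup \{p, q\}) - v_i(R \cup \{q\})\},
\]
because all marginals at positions other than the swapped pair depend only on the prefix set and the full set, both of which are left unchanged by the adjacent swap.

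Next I would note that the four marginals above all lie in $\{-1, 0, c\}$ by the $\{-1, 0, c\}$-submodularity assumption, and the two pairs share the common sum $v_i(R \cup \{p, q\}) - v_i(R)$ by telescoping. So the whole proof reduces to a numerical claim: for $c \geq 2$, a two-element multiset with entries in $\{-1, 0, c\}$ is uniquely determined by its sum. I would verify this by listing all six multisets and their sums: $\{-1,-1\} \mapsto -2$, $\{-1,0\} \mapsto -1$, $\{0,0\} \mapsto 0$, $\{-1,c\} \mapsto c-1$, $\{0,c\} \mapsto c$, $\{c,c\} \mapsto 2c$. For $c \geq 2$ one has $-2 < -1 < 0 < c-1 < c < 2c$, so these six sums are pairwise distinct, which finishes the reduction.

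The hypothesis $c \geq 2$ is essential in exactly this step: at $c = 1$ the multisets $\{-1, 1\}$ and $\{0, 0\}$ both sum to $0$, so the sum no longer pins down the multiset -- consistent with (and necessary for) the hardness result of \Cref{thm:submodular-hard}. I do not anticipate a significant obstacle; submodularity enters only indirectly, through the $A$-submodularity requirement that every marginal gain lies in $\{-1, 0, c\}$, and the rest is a three-line arithmetic check plus the standard fact that the symmetric group is generated by adjacent transpositions.
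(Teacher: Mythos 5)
Your proposal is correct and follows essentially the same approach as the paper: reduce order-neutrality to adjacent transpositions, observe that the two affected marginals have the same total by telescoping, and use the fact that for $c \geq 2$ a two-element multiset with entries in $\{-1,0,c\}$ is determined by its sum (listing the same six cases). You also note the same $c=1$ failure mode $\{-1,1\}$ vs.\ $\{0,0\}$ that the paper highlights as the crux of Theorem~\ref{thm:submodular-hard}.
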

\begin{proof}
Let $v_i$ be a $\{-1, 0, c\}$ submodular valuation for $c \ge 2$.
Consider some bundle, some order $\pi$ over the items in the bundle, and some items $o, o'$ which appear consecutively in the order $\pi$. That is, $\pi$ consists of the set of items $S$ (in some order) followed by the items $o$ and $o'$ followed by another set of items $S'$ (in some order).

If we swap $o$ and $o'$, exactly two marginal values change. More specifically, $v_i(S + o) - v_i(S)$ and $v_i(S + o + o') - v_i(S + o)$ become $v_i(S + o') - v_i(S)$ and $v_i(S + o + o') - v_i(S + o')$. The value of the bundle remains the same no matter which order we use, so we must have
\begin{align*}
	&\left(v_i(S + o') - v_i(S)\right) + \left(v_i(S + o + o') - v_i(S + o')\right) \\&\quad = \left(v_i(S + o) - v_i(S)\right) + \left(v_i(S + o + o') - v_i(S + o)\right)
\end{align*}

The statement follows from noting that when $c \ge 2$, every value of $v_i(S + o + o') - v_i(S)$ has a unique decomposition into two values. 
More specifically, the value of $v_i(S + o + o') - v_i(S)$ can only be $-2,-1,0,c-1,c$ or $2c$. In each case the marginal gains of the two items are encoded by exactly the same two values. 
For example, the only possible way $v_i(S + o + o') - v_i(S) = c-1$ is if one of the items provides a value of $c$ and the other provides a value of $-1$; if it is $-2$ then both items offer a marginal gain of $-1$. 
Therefore, the set $\{v_i(S + o) - v_i(S), v_i(S + o + o') - v_i(S + o)\}$ must be exactly equivalent to the set $\{v_i(S + o') - v_i(S), v_i(S + o + o') - v_i(S + o')\}$ if they sum up to the same value. 
This implies that swapping two consecutive elements in an order retains order neutrality. 

Since we can move from any order $\pi$ to any order $\pi'$ using consecutive element swaps (as is done in bubble sort), $v_i$ must be order neutral.

This proves the statement. We note interestingly that this argument does not hold for $\{-1, 0, 1\}$ submodular valuations, since $\{-1, 1\}$ and the set $\{0, 0\}$ have the same sum. So if $\{v_i(S + o) - v_i(S), v_i(S + o + o') - v_i(S + o)\} = \{-1, 1\}$, swapping the two items could lead to the set of marginal gains $\{0, 0\}$. We use this specific property to show \np-hardness for $\{-1, 0, 1\}$ valuations in Theorem~\ref{thm:submodular-hard}.
\end{proof}

\section{Conclusions and Future Work}
In this work, we almost completely characterize the complexity of computing max Nash welfare and max egalitarian welfare allocations under ternary valuations. 
Rather unfortunately, we show that existing algorithms that work under binary and bivalued valuations cannot be generalized beyond bivalued valuations. 
Specifically, our results highlight a fundamental limitation of the path augmentation technique used heavily to design algorithms for binary and bivalued valuations \citep{barman2018pathtransfers, akrami2022mnw, Barman2021MRFMaxmin, barman2022groupstrategyproof,viswanathan2022yankee,viswanathan2022generalyankee}.

There are two natural questions left for future work. The first is the complexity of computing MEW allocations under $\{-2, 0, c\}$ valuations. Resolving this question would complete our characterization. The second question is to gain a further understanding of the increase in hardness as we generalize beyond additive and into submodular valuations. We know from Theorem \ref{thm:submodular-hard} that some problems become significantly harder as we move from additive to submodular valuations but the results of \citet{Babaioff2021Dichotomous} suggests that some problems still remain easy. There are two specific cases whose complexity still remain open questions. The first is computing MEW allocations under $\{-2, 0, c\}$ submodular valuations and the second is computing MNW allocations under $\{2, c\}$ submodular valuations. Resolving these two cases would result in a complete characterization of max Nash welfare and max egalitarian welfare allocations under submodular valuations as well.

\bigskip

\noindent
{\bf Acknowledgments:} Research done while Fitzsimmons was on sabbatical visiting the University of Massachusetts, Amherst. Viswanathan and Zick are supported by an NSF grant CISE:IIS:RI:\#2327057. Fitzsimmons is supported in part by NSF grant CCF-2421978.

\newpage

\bibliography{abb,literature}

\begin{thebibliography}{40}
\providecommand{\natexlab}[1]{#1}
\providecommand{\url}[1]{\texttt{#1}}
\expandafter\ifx\csname urlstyle\endcsname\relax
  \providecommand{\doi}[1]{doi: #1}\else
  \providecommand{\doi}{doi: \begingroup \urlstyle{rm}\Url}\fi

\bibitem[Akrami et~al.(2022{\natexlab{a}})Akrami, Chaudhury, Hoefer, Mehlhorn,
  Schmalhofer, Shahkarami, Varricchio, Vermande, and van
  Wijland]{akrami2022halfintegers}
Hannaneh Akrami, Bhaskar~Ray Chaudhury, Martin Hoefer, Kurt Mehlhorn, Marco
  Schmalhofer, Golnoosh Shahkarami, Giovanna Varricchio, Quentin Vermande, and
  Ernest van Wijland.
\newblock Maximizing {N}ash social welfare in 2-value instances: {T}he
  half-integer case.
\newblock Tech.\ Rep. arXiv:2207.10949~[cs.GT], arXiv.org, May
  2022{\natexlab{a}}.

\bibitem[Akrami et~al.(2022{\natexlab{b}})Akrami, Chaudhury, Hoefer, Mehlhorn,
  Schmalhofer, Shahkarami, Varricchio, Vermande, and van
  Wijland]{akrami2022mnw}
Hannaneh Akrami, Bhaskar~Ray Chaudhury, Martin Hoefer, Kurt Mehlhorn, Marco
  Schmalhofer, Golnoosh Shahkarami, Giovanna Varricchio, Quentin Vermande, and
  Ernest van Wijland.
\newblock Maximizing {N}ash social welfare in 2-value instances.
\newblock In \emph{Proceedings of the 36th AAAI Conference on Artificial
  Intelligence (AAAI)}, pages 4760--4767, 2022{\natexlab{b}}.

\bibitem[Amanatidis et~al.(2021)Amanatidis, Birmpas, Filos-Ratsikas, Hollender,
  and Voudouris]{amantidis2021mnw}
Georgios Amanatidis, Georgios Birmpas, Aris Filos-Ratsikas, Alexandros
  Hollender, and Alexandros~A. Voudouris.
\newblock Maximum {N}ash welfare and other stories about {EFX}.
\newblock \emph{Theoretical Computer Science}, 863:\penalty0 69--85, April
  2021.

\bibitem[Annamalai et~al.(2017)Annamalai, Kalaitzis, and
  Svensson]{annamalai2017santaclaus}
Chidambaram Annamalai, Christos Kalaitzis, and Ola Svensson.
\newblock Combinatorial algorithm for restricted max-min fair allocation.
\newblock \emph{ACM Transactions on Algorithms}, 13\penalty0 (3):\penalty0
  1--28, May 2017.

\bibitem[Aziz et~al.(2022)Aziz, Li, Moulin, and Wu]{Aziz2022fairsurvey}
Haris Aziz, Bo~Li, Herv\'{e} Moulin, and Xiaowei Wu.
\newblock Algorithmic fair allocation of indivisible items: A survey and new
  questions.
\newblock \emph{SIGecom Exchanges}, 20\penalty0 (1):\penalty0 24--40, November
  2022.

\bibitem[Aziz et~al.(2023)Aziz, Lindsay, Ritossa, and Suzuki]{aziz2023twotypes}
Haris Aziz, Jeremy Lindsay, Angus Ritossa, and Mashbat Suzuki.
\newblock Fair allocation of two types of chores.
\newblock In \emph{Proceedings of the 22nd International Conference on
  Autonomous Agents and Multi-Agent Systems (AAMAS)}, pages 143--151, 2023.

\bibitem[Babaioff et~al.(2021)Babaioff, Ezra, and
  Feige]{Babaioff2021Dichotomous}
Moshe Babaioff, Tomer Ezra, and Uriel Feige.
\newblock Fair and truthful mechanisms for dichotomous valuations.
\newblock In \emph{Proceedings of the 35th AAAI Conference on Artificial
  Intelligence (AAAI)}, pages 5119--5126, 2021.

\bibitem[Bansal and Sviridenko(2006)]{bansal2006santaclaus}
Nikhil Bansal and Maxim Sviridenko.
\newblock The {S}anta {C}laus problem.
\newblock In \emph{Proceedings of the Thirty-Eighth Annual ACM Symposium on
  Theory of Computing (STOC)}, pages 31--40, 2006.

\bibitem[Barman and Verma(2021)]{Barman2021MRFMaxmin}
Siddharth Barman and Paritosh Verma.
\newblock Existence and computation of maximin fair allocations under
  matroid-rank valuations.
\newblock In \emph{Proceedings of the 20th International Conference on
  Autonomous Agents and Multi-Agent Systems (AAMAS)}, pages 169--177, 2021.

\bibitem[Barman and Verma(2022)]{barman2022groupstrategyproof}
Siddharth Barman and Paritosh Verma.
\newblock Truthful and fair mechanisms for matroid-rank valuations.
\newblock In \emph{Proceedings of the 36th AAAI Conference on Artificial
  Intelligence (AAAI)}, pages 4801--4808, 2022.

\bibitem[Barman et~al.(2018{\natexlab{a}})Barman, Krishnamurthy, and
  Vaish]{barman2018pathtransfers}
Siddharth Barman, Sanath~Kumar Krishnamurthy, and Rohit Vaish.
\newblock Greedy algorithms for maximizing nash social welfare.
\newblock In \emph{Proceedings of the 17th International Conference on
  Autonomous Agents and Multi-Agent Systems (AAMAS)}, pages 7--13,
  2018{\natexlab{a}}.

\bibitem[Barman et~al.(2018{\natexlab{b}})Barman, Murthy, and
  Vaish]{Barman2018FindingFA}
Siddharth Barman, Sanath Kumar~Krishna Murthy, and Rohit Vaish.
\newblock Finding fair and efficient allocations.
\newblock \emph{Proceedings of the 19th ACM Conference on Economics and
  Computation (EC)}, page 557–574, 2018{\natexlab{b}}.

\bibitem[Barman et~al.(2023)Barman, Narayan, and Verma]{barman2023chores}
Siddharth Barman, Vishnu Narayan, and Paritosh Verma.
\newblock Fair chore division under binary supermodular costs.
\newblock In \emph{Proceedings of the 22nd International Conference on
  Autonomous Agents and Multi-Agent Systems (AAMAS)}, page 2863–2865, 2023.

\bibitem[Berman et~al.(2003)Berman, Karpinski, and
  Scott]{ber-kar-sco:j:approx-hardness-max3sat}
Piotr Berman, Marek Karpinski, and Alexander Scott.
\newblock Approximation hardness of short symmetric instances of {MAX-3SAT}.
\newblock \emph{Electronic Colloquim on Computational Complexity}, TR03, 2003.

\bibitem[Brooks(1941)]{Brooks1941}
R.~L. Brooks.
\newblock On colouring the nodes of a network.
\newblock \emph{Mathematical Proceedings of the Cambridge Philosophical
  Society}, 37\penalty0 (2):\penalty0 194–197, 1941.

\bibitem[Caragiannis et~al.(2016)Caragiannis, Kurokawa, Moulin, Procaccia,
  Shah, and Wang]{Caragiannis2016MNW}
Ioannis Caragiannis, David Kurokawa, Herv\'{e} Moulin, Ariel~D. Procaccia,
  Nisarg Shah, and Junxing Wang.
\newblock The unreasonable fairness of maximum {N}ash welfare.
\newblock In \emph{Proceedings of the 17th ACM Conference on Economics and
  Computation (EC)}, page 305–322, 2016.

\bibitem[Chakrabarty et~al.(2009)Chakrabarty, Chuzhoy, and
  Khanna]{chakrabarty2009leximin}
Deeparnab Chakrabarty, Julia Chuzhoy, and Sanjeev Khanna.
\newblock On allocating goods to maximize fairness.
\newblock In \emph{Proceedings of the 50th Symposium on Foundations of Computer
  Science (FOCS)}, pages 107--116, 2009.

\bibitem[Chakrabarty et~al.(2015)Chakrabarty, Khanna, and
  Li]{chakrabarty2015bivalued}
Deeparnab Chakrabarty, Sanjeev Khanna, and Shi Li.
\newblock On $(1,\epsilon)$-restricted assignment makespan minimization.
\newblock In \emph{Proceedings of the 26th Annual ACM-SIAM Symposium on
  Discrete Algorithms (SODA)}, pages 1087--1101, 2015.

\bibitem[Chleb{\'\i}k and Chleb{\'\i}kov{\'a}(2006)]{chlebik2006minvc}
Miroslav Chleb{\'\i}k and Janka Chleb{\'\i}kov{\'a}.
\newblock Complexity of approximating bounded variants of optimization
  problems.
\newblock \emph{Theoretical Computer Science}, 354\penalty0 (3):\penalty0
  320--338, 2006.

\bibitem[Cole and Gkatzelis(2015)]{cole2015nash}
Richard Cole and Vasilis Gkatzelis.
\newblock Approximating the nash social welfare with indivisible items.
\newblock In \emph{Proceedings of the 47th Annual ACM Symposium on Theory of
  Computing (STOC)}, page 371–380, 2015.

\bibitem[Cousins et~al.(2023{\natexlab{a}})Cousins, Viswanathan, and
  Zick]{cousins2023bivalued}
Cyrus Cousins, Vignesh Viswanathan, and Yair Zick.
\newblock Dividing good and great items among agents with bivalued submodular
  valuations.
\newblock In \emph{Proceedings of the 19th Conference on Web and Internet
  Economics (WINE)}, pages 225--241, 2023{\natexlab{a}}.

\bibitem[Cousins et~al.(2023{\natexlab{b}})Cousins, Viswanathan, and
  Zick]{cousins2023mixedmanna}
Cyrus Cousins, Vignesh Viswanathan, and Yair Zick.
\newblock The good, the bad and the submodular: {F}airly allocating mixed manna
  under order-neutral submodular preferences.
\newblock In \emph{Proceedings of the 19th Conference on Web and Internet
  Economics (WINE)}, pages 207--224, 2023{\natexlab{b}}.

\bibitem[Dobzinski et~al.(2024)Dobzinski, Li, Rubinstein, and
  Vondr\'{a}k]{dobzinski2024subadditive}
Shahar Dobzinski, Wenzheng Li, Aviad Rubinstein, and Jan Vondr\'{a}k.
\newblock A constant-factor approximation for nash social welfare with
  subadditive valuations.
\newblock In \emph{Proceedings of the 56th Annual ACM Symposium on Theory of
  Computing (STOC)}, page 467–478, 2024.

\bibitem[Ebadian et~al.(2022)Ebadian, Peters, and
  Shah]{ebadian2022bivaluedchores}
Soroush Ebadian, Dominik Peters, and Nisarg Shah.
\newblock How to fairly allocate easy and difficult chores.
\newblock In \emph{Proceedings of the 21st International Conference on
  Autonomous Agents and Multi-Agent Systems (AAMAS)}, pages 372--380, 2022.

\bibitem[Garg and Murhekar(2021)]{garg2021bivaluedadditive}
Jugal Garg and Aniket Murhekar.
\newblock Computing fair and efficient allocations with few utility values.
\newblock In \emph{Proceedings of the 14th International Symposium on
  Algorithmic Game Theory (SAGT)}, pages 345--359, 2021.

\bibitem[Garg et~al.(2018)Garg, Hoefer, and Mehlhorn]{garg2018budgetadditive}
Jugal Garg, Martin Hoefer, and Kurt Mehlhorn.
\newblock Approximating the nash social welfare with budget-additive
  valuations.
\newblock In \emph{Proceedings of the 29th Annual ACM-SIAM Symposium on
  Discrete Algorithms (SODA)}, page 2326–2340, 2018.

\bibitem[Garg et~al.(2021)Garg, Husi\'{c}, and V\'{e}gh]{garg2021rado}
Jugal Garg, Edin Husi\'{c}, and L\'{a}szl\'{o}~A. V\'{e}gh.
\newblock Approximating {N}ash social welfare under rado valuations.
\newblock In \emph{Proceedings of the 53rd Annual ACM Symposium on Theory of
  Computing (STOC)}, pages 1412--1425, 2021.

\bibitem[Garg et~al.(2022)Garg, Murhekar, and Qin]{garg2022bivaluedchores}
Jugal Garg, Aniket Murhekar, and John Qin.
\newblock Fair and efficient allocations of chores under bivalued preferences.
\newblock In \emph{Proceedings of the 36th AAAI Conference on Artificial
  Intelligence (AAAI)}, pages 5043--5050, 2022.

\bibitem[Garg et~al.(2023)Garg, Husi\'{c}, Li, V\'{e}gh, and
  Vondr\'{a}k]{garg2023mnw}
Jugal Garg, Edin Husi\'{c}, Wenzheng Li, L\'{a}szl\'{o}~A. V\'{e}gh, and Jan
  Vondr\'{a}k.
\newblock Approximating {N}ash social welfare by matching and local search.
\newblock In \emph{Proceedings of the 55th Annual ACM Symposium on Theory of
  Computing (STOC)}, pages 1298--1310, 2023.

\bibitem[Gonzalez(1985)]{gon:j:clustering-minimize}
T.~Gonzalez.
\newblock Clustering to minimize the maximum intercluster distance.
\newblock \emph{Theoretical Computer Science}, 38:\penalty0 293--306, 1985.

\bibitem[Halpern et~al.(2020)Halpern, Procaccia, Psomas, and
  Shah]{halpern2020binaryadditive}
Daniel Halpern, Ariel~D. Procaccia, Alexandros Psomas, and Nisarg Shah.
\newblock Fair division with binary valuations: One rule to rule them all.
\newblock In \emph{Proceedings of the 16th Conference on Web and Internet
  Economics (WINE)}, page 370–383, 2020.

\bibitem[Jain and Vaish(2024)]{jain2024matching}
Pallavi Jain and Rohit Vaish.
\newblock Maximizing {N}ash social welfare under two-sided preferences.
\newblock In \emph{Proceedings of the 38th AAAI Conference on Artificial
  Intelligence (AAAI)}, 2024.

\bibitem[Lee(2017)]{lee:j:apx-hardness}
Euiwoong Lee.
\newblock {APX}-hardness of maximizing {N}ash social welfare with indivisible
  items.
\newblock \emph{Information Processing Letters}, 22:\penalty0 17--20, June
  2017.

\bibitem[Lenstra et~al.(1990)Lenstra, Shmoys, and Tardos]{lenstra1990chores}
J.~K. Lenstra, D.~B. Shmoys, and \'{E}. Tardos.
\newblock Approximation algorithms for scheduling unrelated parallel machines.
\newblock \emph{Math. Program.}, 46\penalty0 (3):\penalty0 259–271, February
  1990.

\bibitem[Li and Vondrak(2021)]{li2021mnw}
Wenzheng Li and Jan Vondrak.
\newblock A constant-factor approximation algorithm for {N}ash social welfare
  with submodular valuations.
\newblock In \emph{Proceedings of the 62nd Symposium on Foundations of Computer
  Science (FOCS)}, pages 25--36, 2021.

\bibitem[Papadimitriou and Yannakakis(1991)]{papadimitriou1991apx}
Christos~H. Papadimitriou and Mihalis Yannakakis.
\newblock Optimization, approximation, and complexity classes.
\newblock \emph{Journal of Computer and System Sciences}, 43\penalty0
  (3):\penalty0 425--440, 1991.

\bibitem[Petrank(1994)]{pet:j:max-vertex}
Erez Petrank.
\newblock The hardness of approximation: {G}ap location.
\newblock \emph{Computational Complexity}, 4:\penalty0 133--157, June 1994.

\bibitem[Plaut and Roughgarden(2020)]{pla-rou:j:efx}
Benjamin Plaut and Tim Roughgarden.
\newblock Almost envy-freeness with general valuations.
\newblock \emph{SIAM Journal on Discrete Mathematics}, 34\penalty0
  (2):\penalty0 1039--1068, 2020.

\bibitem[Viswanathan and
  Zick(2023{\natexlab{a}})]{viswanathan2022generalyankee}
Vignesh Viswanathan and Yair Zick.
\newblock A general framework for fair allocation under matroid rank
  valuations.
\newblock In \emph{Proceedings of the 24th ACM Conference on Economics and
  Computation (EC)}, pages 1129--1152, 2023{\natexlab{a}}.

\bibitem[Viswanathan and Zick(2023{\natexlab{b}})]{viswanathan2022yankee}
Vignesh Viswanathan and Yair Zick.
\newblock Yankee swap: a fast and simple fair allocation mechanism for matroid
  rank valuations.
\newblock In \emph{Proceedings of the 22nd International Conference on
  Autonomous Agents and Multi-Agent Systems (AAMAS)}, pages 179--187,
  2023{\natexlab{b}}.

\end{thebibliography}

\newpage

\appendix

\section{Other Proofs from Section \ref{sec:goods}}\label{apdx:goods}

\lemmaxthreevchardness*
\begin{proof}
We prove this using our problem's relation to the minimum vertex cover problem. Given a graph $G$, the minimum vertex cover problem asks for a minimum sized set of vertices such that each edge is incident to a vertex in the set.
%the minimum vertex cover of the graph $G$ is. 
\cite{chlebik2006minvc} show that the problem of computing a minimum vertex cover cannot be approximated by a factor of $\frac{100}{99}$ even when the graph $G$ is $3$-regular. 

In their reduction, they construct a $3$-regular graph $G = (V, E)$ and show that deciding whether the graph has a vertex cover of size $k^*$ (for some $k^*$) or whether all vertex covers of the graph have size at least $\frac{100}{99}k^*$ is \np-hard. The exact value of $k^*$ is unimportant; one only needs to note that $k^* \ge \frac{|V|}{2}$ since we need at least $\frac{|V|}{2}$ nodes in a $3$-regular graph to cover $|E| = \frac{3|V|}{2}$ edges. 

This can be easily rephrased as our problem. We first must note that if all vertex covers have size at least $\frac{100}{99}k^*$, then any subset of nodes of size $k^*$ must not cover at least $\frac{1}{99}k^*$ edges; if there is a subset of nodes of size $k^*$ that covers strictly more than $|E| - \frac{1}{99}k^*$ edges, we can use it to trivially construct a vertex cover of size strictly less than $\frac{100}{99}k^*$. 

Therefore, given a $3$-regular graph $G = (V, E)$ and $k^*$, it is \np-hard to decide whether the graph $G$ has a vertex cover of size $k^*$ or whether all $k^*$ sized subsets of $|V|$ must not cover at least $\frac{1}{99}k^*$ edges. The exact phrasing of the lemma comes from lower bound $\frac{1}{99}k^*$ as 
\begin{align*}
    \frac{1}{99}k^* \ge \frac{1}{198}|V| = \frac{1}{297}|E|
\end{align*}
The final equality comes from the fact that the graph is $3$-regular and therefore $|E| = \frac{3|V|}{2}$.
\end{proof}

\corrconstantlowerbound*
\begin{proof}
This proof comes from the observation that the min vertex cover of any $3$-regular graph with $n$ nodes has size upper bounded at $2n/3$. 

This property follows from the fact that every $3$-regular graph (other than $K_4$) has an independent set of size at least $n/3$ (Brooks' Theorem \citep{Brooks1941}). The complement of this independent set must be a vertex cover of size at most $2n/3$. 

We can use this property to upper bound $k$ in the proof of Theorem \ref{thm:mnw-second-apx-hard} with $2n/3$. This gives us the lower bound of 
\begin{align*}
    \frac{\NSW^+}{\NSW^{-}} = \left (\left (\frac{3b}{2b + a'}\right )^{\frac{\gamma |E|}{3}} \right )^{\frac{1}{3k - 0.5|V|}} \ge \left (\frac{3b}{2b + a'}\right )^{\frac{\gamma}{3}} 
\end{align*}

Finally, applying $a' = 2/3$ (from our choices of $a$, $b$, and $c$) and plugging in $\gamma = \frac{1}{297}$ (from the proof of Lemma \ref{lem:max-3-vc-hardness}) gives us our constant.
\end{proof}
\propthreeccase*
\begin{proof}
This proof follows from a reduction from Lemma \ref{lem:max-3-vc-hardness} and is very similar to Theorem \ref{thm:mnw-second-apx-hard}.

Given a $3$-regular graph $G = (V, E)$ and an integer $k$, we construct a fair allocation instance with $6k + ck - 1.5|V|$ items and $3k - 0.5|V|$ agents.

The $6k + ck - 1.5|V|$ items are defined as follows:
\begin{enumerate}[(a)]
    \item For each each edge $e_{ij} \in E$ we have an item $e_{ij}$,
    \item We have $ck$ {\em vertex cover} items $c_1, \dots, c_k$, and 
    \item We have $6k - 3|V|$ special items.
\end{enumerate}

The valuation function of the $3k - 0.5|V|$ agents are defined as follows:
\begin{enumerate}[(a)]
    \item For each node $i \in V$, we have an agent who values the edges incident on it at $c$, and
    \item We have $3k - 1.5|V|$ dummy agents, who value exactly two special items each at $c$ and value all the edge items at $c$. 
\end{enumerate}

All unmentioned values are at $3$. We also ensure that no two dummy agents have any overlap in the special items they value at $c$.

We now prove the correctness of our reduction.

($\Longrightarrow$) Assume the graph admits a vertex cover (say $S$) of size $k$. Then, we allocate the $c$ vertex cover items to each of the agents in $S$. For all other nodes in the graph, we allocate the items corresponding to the three edges they are incident on. This is possible since at least one endpoint of each edge belongs to the vertex cover $S$.

Note that we have only allocated $ck + 3|V| - 3k$ items. We allocate the remaining items to the dummy agents. For each dummy agent, we allocate the two special items they value at $c$ and any one unallocated edge item. This way, we allocate all $6k + ck - 1.5|V|$ items.

All agents receive a utility of $3c$. The Nash welfare of this allocation is 
\begin{align*}
    \NSW(v, X) = 3c && (\NSW^+)
\end{align*}
This allocation also maximizes utilitarian social welfare. Additionally, since it maximizes utilitarian social welfare and gives all agents the exact same utility, this is the highest possible Nash welfare achievable on this instance.

($\Longleftarrow$) Assume that no subset of nodes of size $k$ covers more than $(1-\gamma)|E|$ edges. Consider the subset of nodes (say $S$) who receive $c$ vertex cover items that they value at $3$ each. This subset of nodes must not cover at least $\gamma|E|$ edges.
For each of these edges, one of its endpoints must not receive a utility of exactly $3c$. This is because both endpoints only value three items at $c$ and at most one of these agents can receive the edge between them which is not in the vertex cover. Note that is argument uses the fact that $c$ is not divisible by $3$, so if an agent receives a utility of exactly $3c$, they must either receive $3$ items valued at $c$ or $c$ items valued at $3$.

If we do this for all edges, we get that there at least $\frac{\gamma|E|}{3}$ agents who do not receive a utility of exactly $3c$. We divide by three since $G$ is $3$-regular, so each node is counted at most thrice.

Assume some set of agents $N'$ ($|N'| = m'$) do not receive a utility of exactly $3c$. Since utilities are integral, these agents must receive a utility of at most $3c-1$ or at least $3c+1$. To maximize Nash welfare these agents must have utilities as close to $3c$ as possible and must have a utilitarian social welfare upper bounded by $3c(3k - 0.5|V|)$.
This gives us the max Nash welfare upper bound of
\begin{align*}
    &\left (3c^{3k - 0.5|V| - m'} \left( (3c - 1)(3c + 1)\right)^{m'/2}\right )^\frac{1}{3k - 0.5|V|} \\ &\quad \le \left (3c^{3k - 0.5|V|} \left( \frac{(3c - 1)(3c + 1)}{9c^2}\right)^{\frac{\gamma|E|}{6}}\right )^\frac{1}{3k - 0.5|V|} && (\NSW^{-})
\end{align*}
The last inequality follows from using $m' \ge \frac{\gamma |E|}{3}$.
We have shown that it is \np-hard to decide whether an allocation has Nash welfare at least $\NSW^+$ or whether all allocations have Nash welfare at most $\NSW^{-}$. Taking the ratio of the two Nash welfares gives us the following approximation lower bound
\begin{align*}
    \frac{\NSW^+}{\NSW^{-}} = \left (\left (\frac{9c^2}{9c^2 - 1}\right )^{\frac{\gamma |E|}{6}} \right )^{\frac{1}{3k - 0.5|V|}} \ge \left (\frac{9c^2}{9c^2 - 1}\right )^{\frac{\gamma}{10}} 
\end{align*}
The final inequality follows from upper bounding $k$ at $|V|$ and substituting $|E| = 3|V|/2$. Since this is a positive constant, we are done.
\end{proof}

\thmgoodsmewhard*
\begin{proof}
We reduce from the decision version of the 2P2N-3SAT problem and use a construction very similar to Theorem \ref{thm:mnw-first-apx-hard}. Our proof is divided into three cases to account for the different possible values of $a$, $b$, and $c$. All three cases use a similar construction with the second and third cases following from minor modifications to the first case.

\noindent\textbf{Case 1: $\bm{c \ge 2b}$.} 

% \prob{2P2N-3SAT}%
% {A boolean formula $\phi(x_1, \dots, x_n)$ where $\phi$ is
% in 3CNF and each variable $x_i$ occurs in $\phi$ exactly twice as positive literal and twice as a negated literal.}%
% {Does there exist an assignment to the variables $x_1, \dots, x_n$ such that $\phi$ is true?}

Let $\phi(x_1, \dots, x_n)$ be an instance of 2P2N-3SAT with $m$ clauses $(C_1, \dots, C_m)$. We construct an
instance of our allocation problem with $4n$ agents and $7n$ items as follows.

% Items.
% We first describe the items in our instance.
For each variable $x_i$ we have five items: $x_i$ and $x'_i$ corresponding to the positive literals, $\overline x_i$ and  $\overline x'_i$ corresponding to the negative literals, and another item $clog_i$ which we will call the {\em sink clogger} item. In addition to these variable items, we have 
$2n$ {\em special} items.

% Agents
We now describe the agents in our instance.
\begin{itemize}
\item For each variable $x_i$ we have two agents $pos_i$ and $neg_i$.
The $pos_i$ agent values $x_i$ and $x'_i$ (the positive literals) at $b$, and values $clog_i$ at $c$. 
Similarly, the $neg_i$ agent values $\overline x_i$ and $\overline x'_i$ (the negative literals) at $b$, and values  $clog_i$ at $c$. We refer to the agents $pos_i$ and $neg_i$ as {\em sink} agents and, as mentioned before, the item $clog_i$ as a sink clogger item.

\item For each clause $C_i$, we have a clause agent who values all copies of the items corresponding to the three literals in the clause $C_i$ at $b$, and exactly one special item at $b$. 

\item We have $2n - m$ dummy agents, who value all copies of all the literals at $b$ and exactly one special item at $b$. 

All unmentioned values are $a$. We set up these values in such a way that each special item is valued by exactly one agent (either a dummy agent or a clause agent) at $b$. 
\end{itemize}

% The main idea of this proof is that, in any MNW allocation, one of the sink agents must be clogged by its corresponding sink clogger. This clogging can be intuitively understood as choosing that literal form in the 3SAT instance. For example, if $pos_i$ is allocated the sink clogger $clog_i$, then the items $x_i$ and $x'_i$ can be allocated to the clause agents. Perhaps not obviously so, if $pos_i$ is clogged by the sink clogger, then $neg_i$ must be allocated the items $\overline{x_i}$ and $\overline{x'_i}$; this ensures consistency in the allocation to the clause agents.

Assume that there exists a satisfying assignment $\sigma$ to the 3SAT instance. Then, we argue that our constructed instance will have a max egalitarian welfare of at least $2b$. There is a straightforward assignment that achieves this. Pick any satisfying assignment to the 3SAT instance. If the assignment assigns $\sigma(x_i) = 1$, then we allocate $clog_i$ to $pos_i$, and $\overline x_i$ and $\overline x'_i$ to $neg_i$. Otherwise if $\sigma(x_i) = 0$, then we allocate $clog_i$ to $neg_i$, and $x_i$ and $x'_i$ to $pos_i$. We do this for each $x_i$ to decide the allocation to each $pos_i$ and $neg_i$.

The clause agents receive their corresponding special item that they value at $b$ and exactly one item corresponding to a copy of a literal that satisfies the clause in the assignment $\sigma$ (the choice can be made arbitrarily). 

Finally, for each of the dummy agents, we allocate their corresponding special item and exactly one item corresponding to a literal that has not yet been allocated; again, this choice can be made arbitrarily.

% Such an allocation is possible since after the allocation to the sink agents, there are exactly $2n$ literal items left and there are exactly $2n$ clause and dummy agents. 
It is easy to see that this allocation achieves an egalitarian welfare of $2b$ --- each sink agent clogged by the sink clogger receives a utility of $c$ and all other agents receive a utility of $2b$.

Assume the original 3SAT instance does not admit a satisfying assignment. Consider the max egalitarian allocation $X$ in our constructed instance. Our goal is to show that the egalitarian welfare of $X$ is strictly less than $2b$.

In this allocation $X$, we can assume without loss of generality that the sink clogger items are allocated to the sink agents. If this is not the case, we can swap a sink clogger item $clog_i$ with one of the items allocated to the sink agents $pos_i$ and $neg_i$ to Pareto improve the allocation. Note that $pos_i$ and $neg_i$ must receive at least one item in $X$ since otherwise the egalitarian welfare will be $0$.

Assume $X$ has an egalitarian welfare of at least $2b$. 
These $n$ agents who receive the sink clogger item receive a utility of $c \ge 2b$. There are at most $6n$ items left to be allocated among the remaining $3n$ agents. Since the sink clogger items are the only ones to be valued at $c$, for the remaining $3n$ agents to receive a utility of at least $2b$, all of them must receive exactly $2$ items each that they value at $b$.

This means, for each variable $x_i$, the sink agent that does not receive a sink clogger item must receive both items that they value at $b$. That is, if $clog_i$ is allocated to $pos_i$, then $neg_i$ must receive $\overline x_i$ and $\overline x'_i$. If this does not happen, then $neg_i$ receives two other items that give it a utility of at most $b+a$ violating our initial assumption.

% This allocation to the sink agents enforces consistency in the allocation to
If the allocation satisfies the above property, then the allocation to the clause agents must be consistent --- that is, the literal items allocated to the clauses must either be positive or negative for a particular variable $x_i$ but cannot be both. Since the input 3SAT is unsatisfiable, at least one clause cannot receive a second item with utility $b$; therefore if they receive two items, they receive a utility of at most $b + a$. This contradicts our initial assumption that the allocation $X$ has an egalitarian welfare of at least $2b$.

To conclude, when the input 3SAT instance is satisfiable, the max egalitarian welfare is at least $2b$; otherwise, it is at most $b + a$. The ratio $\frac{2b}{b+ a}$ is a constant, so we have a constant factor approximation lower bound. This completes the proof for this particular case.

\noindent \textbf{Case 2: $\bm{2b > c}$ and $\bm{a = 0}$.} This case follows from exact same reduction and proof as the previous case. However, in this case, when there is a solution to the 3SAT instance, the egalitarian welfare is at least $c$ since $2b > c$. When there is no solution to the 3SAT instance, the egalitarian welfare is at most $b + a = b$. This gives us our constant factor lower bound.

\noindent \textbf{Case 3: $\bm{2b > c}$ and $\bm{a > 0}$.} For our final case, we modify the instance slightly and add $n$ padding items that all agents value at $a$. 

For this new instance, when the original $3$SAT instance is satisfiable, we can find an allocation that achieves an egalitarian welfare of $\min\{c+a, 2b\}$. This can be constructed easily using the allocation from Case 1 and giving the padding items to the $n$ agents who receive a sink clogger item. 

Note that to achieve an egalitarian welfare of $\min\{c+a, 2b\}$, every agent must receive at least two items. Since there are exactly double the number of items as there are agents, every agent must receive exactly two items to achieve an egalitarian welfare of $\min\{c+a, 2b\}$. 

The rest of the proof flows very similarly to Case 1.
Assume the original $3$SAT instance does not admit a satisfying assignment, but there is an allocation $X$ that achieves an egalitarian welfare of at least $\min\{c+a, 2b\}$. 
We can show using an argument similar to Case 1 that the allocation $X$ must allocate all sink clogger items to sink agents. 

Then we can similarly show that if $pos_i$ receives the sink clogger item, then $neg_i$ must receive the items $\overline x_i$ and $\overline x'_i$, which enforces consistency in the allocation to the clauses. 

Finally, we can use the fact that there is no satisfying assignment in the original 3SAT instance to show that at least one of the clause agents cannot receive two items that they value at $b$. Therefore $X$ cannot have an egalitarian welfare of $\min\{c+a, 2b\}$. This gives us a constant factor lower bound of $\frac{\min\{c+a, 2b\}}{\min\{c+a, 2b\}-1}$.
\end{proof}

% \propgoodsapx*
% \begin{proof}
% If $a = 0$, we transform this instance into an instance with binary valuations where all items valued at $b$ or $c$ are valued instead at $1$. Then we compute a leximin allocation \citep{halpern2020binaryadditive,barman2018pathtransfers} with these binary valuations. This allocation is trivially a $c/b$ approximation to the max egalitarian welfare.

% If $a > 0$, the algorithm is even simpler. We divide the set of items into $n$ parts such that two parts have a size difference of at most $1$. We then give each agent an arbitrary part. This allocation is trivially a $c/a$ approximation to the max egalitarian welfare.
% \end{proof}

\section{Missing Proofs from Section \ref{sec:chores}}
% \propchoresapx*
% \begin{proof}
% We use the same algorithm from the previous proof. We divide the set of items into $n$ parts such that two parts have a size difference of at most $1$. We then give each agent an arbitrary part. This allocation is trivially a $\frac ba$ approximation to the max egalitarian welfare.
% \end{proof}

% \section{Missing Proofs from Section \ref{sec:mixed-manna}}

\thmmixedmannamewhard*
\begin{proof}
This proof, like most proofs before this, reduces from the decision version of the 2P2N-3SAT problem. We assume without loss of generality that $|a|$ and $|c|$ are co-prime; otherwise we can scale agent valuations by the greatest common divisor of $|a|$ and $|c|$. 

Let $\phi(x_1, \dots, x_n)$ be an instance of 2P2N-3SAT with $m$ clauses $(C_1, \dots, C_m)$. We construct an instance of our allocation problem with $4n$ agents and $3cn + 3|a|n$ items as follows.

% Items.
% We first describe the items in our instance.
For each variable $x_i$ we have four items: $x_i$ and $x'_i$ corresponding to the positive literals, and $\overline x_i$ and  $\overline x'_i$ corresponding to the negative literals. In addition to these variable items, we have 
$3|a|n - 4n$ {\em special} items and $3cn$ {\em padding} items.

% Agents
We now describe the agents in our instance.
\begin{itemize}
\item For each variable $x_i$ we have two agents $pos_i$ and $neg_i$.
The $pos_i$ agent values $x_i$ and $x'_i$ (the positive literals) at $c$. 
% \zfnote{Need to number these dummies or mention above that each agent has a unique one.}
Similarly, the $neg_i$ agent values $\overline x_i$ and $\overline x'_i$ (the negative literals) at $c$. Further, for each $i$, there are $|a| - 2$ special items that both $pos_i$ and $neg_i$ value at $c$. No other agents value these $|a| - 2$ special items at $c$. Similar to the previous proofs, we call the agents $pos_i$ and $neg_i$ as {\em sink} agents.

\item For each clause $C_i$, we have a clause agent who values all copies of the items corresponding to the three literals in the clause $C_i$ at $c$, and exactly $|a| - 1$ special items at $c$. 

\item We have $2n - m$ dummy agents, who value all copies of all the literals at $c$ and exactly $|a| - 1$ special items at $c$. 

All unmentioned values are $a$. We set up these values in such a way that clause and dummy agents share no overlap in the special items they value at $c$. Note that the $3cn$ padding items are valued by all agents at $a$.
\end{itemize}

Assume that there exists a satisfying assignment $\sigma$ to the $3$-SAT instance. Then, we argue that our constructed instance will have a max egalitarian welfare of at least $0$. There is a straightforward assignment that achieves this. Pick any satisfying assignment to the 3-SAT instance. If the assignment assigns $\sigma(x_i) = 1$, then we allocate $c$ padding items, $\overline x_i$ and $\overline x'_i$ to $neg_i$. We also allocate the $|a| - 2$ special items that $neg_i$ values at $c$ to $neg_i$.

Otherwise if $\sigma(x_i) = 0$, then we allocate $c$ padding items, $x_i$ and $x'_i$ to $pos_i$. We also allocate the $|a| - 2$ special items that $pos_i$ values at $c$ to $pos_i$.
We do this for each $x_i$ to decide the allocation to each $pos_i$ and $neg_i$. Note that for each $i$, one of $pos_i$ or $neg_i$ receive an empty bundle.

The clause agents receive their corresponding special items that they value at $c$, $c$ padding items that they value at $a$ and exactly one item corresponding to a copy of a literal that satisfies the clause (the choice can be made arbitrarily).

Finally, for each of the dummy agents, we allocate them their corresponding special items they value at $c$, $c$ padding items they value at $a$ and exactly one item corresponding to a literal that has not yet been allocated; again, this choice can be made arbitrarily. 

In this allocation, agents either receive a utility of $|a|c + ac = 0$ or $0$ (from an empty bundle). This allocation therefore has an egalitarian welfare of $0$.

Assume the original $3$-SAT instance does not admit a satisfying assignment. Consider the max egalitarian allocation $X$ in our constructed instance. Our goal is to show that the egalitarian welfare of $X$ is strictly less than $0$.

Assume for contradiction that $X$ has an egalitarian welfare of at least $0$. Note that the highest possible utilitarian welfare achievable is exactly $0$, so for an allocation to have egalitarian welfare $0$, all agents must receive the utility $0$. This also implies that to achieve an egalitarian welfare of $0$, an allocation must maximize utilitarian welfare and all items which are valued at $c$ by some agent must provide value $c$ to the agent it is allocated to; this comprises of all the special items and the literal items. 

Therefore, all the special items that clause agents uniquely value at $c$ must be allocated to them. Additionally, at least one of $pos_i$ and $neg_i$ for each $i$ must receive some special items that they value at $c$.

Furthermore, for agents to receive a utility of exactly $0$, they must either receive an empty bundle, or must receive $\alpha c$ items with value $a$ and $\alpha |a|$ items with value $c$ (for some positive integer $\alpha$). This follows from the fact that $a$ and $c$ are assumed to be coprime. 

Since each clause and dummy agent receives some items that give them positive utility, they must receive at least $c$ padding items. Using the same argument, at least one of $pos_i$ and $neg_i$ for each $i$ must receive $c$ padding items. 

Note that there are $3cn$ padding items, so exactly one of $pos_i$ and $neg_i$ for each $i$ must receive $c$ padding items.

% The remaining padding items behave like sink clogger items (from the previous proofs), and must be allocated to the sink agents. Moreover, exactly one agent out of $pos_i$ and $neg_i$ (for some $i$) must receive a padding item along with all the special items they value at $c$. Otherwise, some agent will have a negative utility. 

The rest of the proof follows similarly to all the 2P2N-3SAT proofs in this paper. If $c$ padding items is allocated to $pos_i$, then $pos_i$ must receive $x_i$, $x'_i$ and $|a| - 2$ special items to ensure their utility is $0$. This enforces consistency in the allocation to the clauses.

Since the input $3$-SAT is unsatisfiable, at least one clause cannot receive a literal item they value at $c$. This means that clause agent must have negative utility, contradicting our initial assumption about $X$.

To conclude, when the input 3-SAT instance is satisfiable, the max egalitarian welfare is at least $0$; otherwise, it is negative. This gives us the required separation.
\end{proof}

\thmtwonegative*
\begin{proof}
This proof, again, reduces from the decision version of the 2P2N-3SAT problem similar to Theorems \ref{thm:mnw-first-apx-hard} and \ref{thm:goods-mew-hard}.

If $a \ne 2b$, the problem is NP-hard \citep{lenstra1990chores}. Therefore we can assume $a = 2b$. If $c < -a$, then computing MEW allocations with $\{a, c\}$ valuations is NP-hard (Theorem \ref{thm:mixedmanna-mew-hard}). 
We assume that $c \ge -a$ and let $d$ be the largest common factor of $-a$ and $c$; in particular $\frac ad$ and $\frac cd$ are co-prime by definition. 
If $\frac ad$ is either $1$ or $2$ (i.e. $a = d$ or $a = 2d$) then $c$ is a multiple of $b = \frac a2$, since then $b = \frac d2$ or $b = d$. 
If $\frac ad \ge 3$, we apply Theorem \ref{thm:cousins2023mixedmanna} which shows that the problem of computing MEW allocations is NP-hard when agents have $\left\{\frac ad, \frac cd\right\}$ valuations. 
By simply scaling the valuations, this result also implies that computing MEW allocations is NP-hard when agents have $\{a, c\}$ valuations.

Therefore, the only case we need to examine is when $a = 2b$ and $c = -k^* b$ for some $k^* \ge 2$. 
To show that this case is hard, we reduce from the 2P2N-3SAT problem.

Let $\phi(x_1, \dots, x_n)$ be an instance of 2P2N-3SAT with $m$ clauses $(C_1, \dots, C_m)$.
We construct an instance of our allocation problem with $4n$ agents and $3nk^*+3n$ items as follows.

% Items.
% We first describe the items in our instance.
For each variable $x_i$ we have four items: $x_i$ and $x'_i$ corresponding to the positive literals, and $\overline x_i$ and  $\overline x'_i$ corresponding to the negative literals. In addition to these variable items, we have $3nk^* - 4n$ {\em special} items and $3n$ {\em padding} items.

% For each pair of sink agents $\Pos_i,\Neg_i$, we have a special bundle $S_i$ that consists of $k^*-2$ items. 
% In addition, for each clause agent and dummy agent, there is a special bundle of $k^*-1$ items. 
% Thus, there is a total of $3nk^*-4n$ special items.
% There is no overlap in the special bundles. 

% Agents
We now describe the agents in our instance.
\begin{itemize}
% \item Each agent values their set of special items at $b$. Thus, $\Pos_i$ and $\Neg_i$ value every item in $S_i$ at $b$, and each clause/dummy agent values every item in their (unique) bundle of $k^*-1$ items at $b$. No other agents value items in the special bundles at $b$ each
\item For each variable $x_i$ we have two agents $\Pos_i$ and $\Neg_i$.
The $\Pos_i$ agent values $x_i$ and $x'_i$ (the positive literals) at $b$.
Similarly, the $\Neg_i$ agent values $\overline x_i$ and $\overline x'_i$ (the negative literals) at $b$. 
Furthermore, for each $i$, there is a bundle of $k^* - 2$ special items that both $\Pos_i$ and $\Neg_i$ value at $b$ each. 
No other agents value these $k^* - 2$ special items at $b$. 
Finally, there is exactly one padding item that both $\Pos_i$ and $\Neg_i$ value at $c$, and no other agent values this padding item at $c$. 
Similar to the previous proofs, we refer to the agents $\Pos_i$ and $\Neg_i$ as {\em sink} agents.
\item For each clause $C_i$, we have a clause agent who values all copies of the items corresponding to the three literals in the clause $C_i$ at $b$. 
The agent $C_i$ also values a bundle of $k^* - 1$ special items at $b$ each and exactly one padding item at $c$.  
\item We have $2n - m$ dummy agents, who value all copies of all the literals at $b$. 
The dummy agent also values a bundle of $k^* - 1$ special items at $b$ each and a unique padding item at $c$. 

All unspecified values are $a$. 
We set up these values in such a way that clause and dummy agents share no overlap in the bundle of special items they value at $b$ or the padding items they value at $c$. 
\end{itemize}
We argue that if there exists a satisfying assignment to the 3SAT instance, then our constructed instance will have a max egalitarian welfare of $0$. 
There is a straightforward assignment that achieves this. 

Pick any satisfying assignment $\sigma$ to the 3SAT instance. If the assignment assigns $\sigma(x_i) = 1$, then we allocate $\overline x_i$ and $\overline x'_i$ to $\Neg_i$. 
We also allocate to $\Neg_i$ the one padding item they value at $c$ and the $k^* - 2$ special items they value at $b$.
The total utility of $\Neg_i$ is thus $2b + c + (k^* - 2)b = k^*b + c = 0$
Similarly if $\sigma(x_i) = 0$, then we allocate $x_i$ and $x'_i$ to $\Pos_i$ along with the one padding item $\Pos_i$ values at $c$ and the $k^*-2$ special items $\Pos_i$ values at $b$. 
Similarly, in that case the total utility of $\Pos_i$ is $0$.
We do this for each $x_i$ to decide the allocation to each $\Pos_i$ and $\Neg_i$. 

For the clause and dummy agents, we first allocate all the special items to the agents who {\em uniquely} value them at $b$ and all the padding items to the agents who {\em uniquely} value them at $c$. At this stage, the clause and dummy agents have a utility of $(k^* - 1)b + c = -b$. 

The clause agents receive exactly one item corresponding to a copy of a literal that satisfies the clause in the assignment $\sigma$ (the choice can be made arbitrarily). Since $x$ is a satisfying assignment, such a copy is guaranteed to exist. 
Since the clause agents value this literal item at $b$, this brings their utility from $-b$ to $0$

Finally, for each of the dummy agents, we allocate exactly one item corresponding to a literal that has not yet been allocated; again, this choice can be made arbitrarily, and adds an additional $b$ to their utility, setting it to $0$. 

In this allocation, all agents receive either an empty bundle (if they are the unassigned sink agent), or a padding item and $k^*$ items they value at $b$ for a total utility of $0$. Therefore, this allocation has an egalitarian welfare of $0$. 

Assume the original $3$-SAT instance does not admit a satisfying assignment. Consider the max egalitarian allocation $X$ in our constructed instance. 
Our goal is to show that the egalitarian welfare of $X$ is strictly less than $0$.

Assume for contradiction that $X$ has an egalitarian welfare of at least $0$. 
Since the instance is constructed in a way such that the maximum utilitarian social welfare possible is $0$, 
the allocation $X$ must also maximize utilitarian social welfare. 
This implies that no agent in $X$ receives an item they value at $a$. 

The rest of the proof follows similarly to all the 2P2N-3SAT proofs in this paper. If a padding item is allocated to $\Pos_i$ in $X$, then $\Pos_i$ must receive $x_i$, $x'_i$ along with the $k^* -2$ items that $\Pos_i$ values at $b$; these are the only items that $\Pos_i$ values at $b$. 
Otherwise, $\Pos_i$ will have positive utility which means another agent must have negative utility since the max utilitarian social welfare possible is $0$. We can make a similar argument with $\Neg_i$. 
This enforces consistency in the allocation to the clauses.
That is, given a variable $x_i$, it cannot be the case that some clause agent receives one positive literal (e.g. one of $x_i$ and $x_i'$) and another clause agent receives a negative literal (one of $\overline x_i$ and $\overline x_i'$). 

Since the input 3SAT is unsatisfiable, at least one clause agent cannot receive a literal item they value at $b$ because each of these items have been allocated to the corresponding sink agents. This clause agent must receive the special items and padding items they uniquely value at $c$ and $b$ respectively to ensure this allocation is utilitarian welfare maximizing. If they do not receive any other items, their utility must be positive resulting in the negative egalitarian welfare of the allocation $X$ (because the max social welfare possible is $0$ which immediately implies that \emph{some} agent has a negative utility), or they must receive an item they value at $a$, again resulting in a negative egalitarian welfare. 

To conclude, when the input 3-SAT instance is satisfiable, the max egalitarian welfare is $0$; otherwise, it is at most $-1$. This gives us the required separation.
\end{proof}

\end{document}